\newif\iftwocols
\pgfplotsset{compat=newest}
\newtheorem{theorem}{Theorem}
\newtheorem{lemma}[theorem]{Lemma}
\newtheorem{proposition}[theorem]{Proposition}
\newtheorem{remark}{Remark}
\theoremstyle{definition}
\let\oldbrace\{
\def\{{\oldbrace\kern0.5pt}
\def\diag{\mathop{\rm diag}\nolimits}%
\newcommand{\ve}{\bf}
\newtheorem{definition}{Definition}
\DeclareFontFamily{U}{futm}{}
\DeclareFontShape{U}{futm}{m}{n}{
  <-> s * [.92] fourier-bb
  }{}
\DeclareSymbolFont{Ufutm}{U}{futm}{m}{n}
\DeclareSymbolFontAlphabet{\mathbb}{Ufutm}
\begin{document}

\title{On Wyner's Common Information in the Gaussian Case}

\author{Erixhen~Sula,~\IEEEmembership{Student Member,~IEEE}~and~Michael~Gastpar,~\IEEEmembership{Fellow,~IEEE}
\thanks{The work in this manuscript was supported in part by the Swiss National Science Foundation under Grant 169294, and by EPFL. The work in this manuscript was partially presented at the {\it 2019 IEEE Information Theory Workshop,} Visby, Sweden and at the {\it 2020 Annual Conference on Information Sciences and Systems,} Princeton, NJ, USA.}
\thanks{E. Sula and M. Gastpar are with the School of Computer and Communication Sciences, {\'E}cole Polytechnique F{\'e}d{\'e}rale de Lausanne (EPFL), Lausanne, Switzerland (email: \{erixhen.sula,michael.gastpar\}@epfl.ch).}
}

\maketitle

\begin{abstract}
Wyner's Common Information and a natural relaxation are studied in the special case of Gaussian random variables.
The relaxation replaces conditional independence by a bound on the conditional mutual information.
The main contribution is the proof that Gaussian auxiliaries are optimal, leading to a closed-form formula.
As a corollary, the proof technique also establishes the optimality of Gaussian auxiliaries for the Gaussian Gray-Wyner network, a long-standing open problem.
\end{abstract}

\begin{IEEEkeywords}
Wyner's Common Information, Gray-Wyner network, Gaussian, water filling, conditional independence, source coding
\end{IEEEkeywords}

\IEEEpeerreviewmaketitle

\section{Introduction}

Wyner's Common Information~\cite{Wyner} is a measure of dependence between two random variables.
Its operational significance lies in network information theory problems (including a canonical information-theoretic model of the problem of coded caching) as well as in distributed simulation of shared randomness.
Specifically, for a pair of random variables, Wyner's common information can be described by the search for the most compact third variable that makes the pair conditionally independent. Compactness is measured in terms of the mutual information between the pair and the third variable. The value of Wyner's common information is the minimum of this mutual information.
The main difficulty of Wyner's common information is finding the optimal choice for the third variable.
Indeed, explicit solutions are known only for a handful of special cases, including the binary symmetric double source and the case of jointly Gaussian random variables.

In the same paper~\cite[Section 4.2]{Wyner}, Wyner also proposes a natural relaxation of his common information, obtained by replacing conditional independence with an upper bound on the conditional mutual information. This relaxation is again directly related to network information theory problems, including the Gray-Wyner source coding network~\cite{Gray--Wyner}.
In the present paper, we study this relaxation in the special case of jointly Gaussian random variables.

\subsection{Related Work and Contribution}

The development of Wyner's common information started with the consideration of a particular network source coding problem, now referred to as the Gray-Wyner network~\cite{Gray--Wyner}. From this consideration, Wyner extracted the compact form of the common information in~\cite{Wyner}, initially restricting attention to the case of discrete random variables.
Extensions to continuous random variables are considered in~\cite{Xu--Liu--Chen,Xu--Liu--Chen--2016}, with a closed-form solution for the Gaussian case.
Our work provides an alternative and fundamentally different proof of this same formula (along with a generalization).
In the same line of work Wyner's common information is computed in additive Gaussian channels \cite{Yang--Chen}. A local characterization of Wyner's common information is provided in \cite{Zheng--Xu--2020}, by optimizing over weakly dependent random variables. In~\cite{Witsenhausen} Witsenhausen managed to give closed-form formulas for a class of distributions he refers to as ``L-shaped.'' The concept of Wyner's common information has also been extended using other information measures~\cite{YuT:18}. Other related works include \cite{Veld--Gastpar, Lapidoth--Wigger}. Wyner's common information has many applications, including to communication networks~\cite{Wyner}, to caching~\cite[Section III.C]{Wang--Lim--Gastpar} and to source coding \cite{Satpathy--Cuff}.


For Gaussian sources the Gray-Wyner network~\cite{Gray--Wyner} problem still remains unsolved. A closed form solution is given in \cite{Gray--Wyner} by assuming that the auxiliaries are Gaussian. Partial progress was made in \cite{Akyol--Rose--2014,Xu--Liu--Chen--2016}, when the sum of the common rate and the private rates is exactly equal to the joint rate distortion function.
For this corner case, it is known that Wyner's common information is the smallest rate needed on the common link. In the present paper, we solve the Gray-Wyner network~\cite{Gray--Wyner} for Gaussian sources, encompassing all previous partial results. 

Other variants of Wyner's common information include \cite{Yu--Li--Chen,Kumar--Li--Gamal}. In \cite{Yu--Li--Chen}, the conditional independence constraint is replaced by the conditional maximal correlation constraint, whereas in \cite{Kumar--Li--Gamal}, the mutual information objective is replaced by the entropy. The relaxation of Wyner's common information studied in this paper is different from the above variants in the sense that it can be expressed using only mutual information, and thus it can be expressed as a trade-off curve in the Gray-Wyner region.

The main difficulty in dealing with Wyner's common information is the fact that it is not a convex optimization problem.
Specifically, while the objective is convex, the constraint set is not a convex set : taking convex combinations does not respect the constraint of conditional independence.
The main contributions of our work concern explicit solutions to this non-convex optimization problem in the special case when the underlying random variables are jointly Gaussian. Our contributions include the following:
\begin{enumerate}
\item We establish an alternative and fundamentally different proof of the well-known formula for (standard) Wyner's common information in the Gaussian case, both for scalars and for vectors. Our proof leverages the technique of factorization of convex envelopes \cite{Geng--Nair}.
\item In doing so, we establish a more general formula for the Gaussian case of a natural relaxation of Wyner's common information. This relaxation was proposed by Wyner. In it, the constraint of conditional independence is replaced by an upper bound on the conditional mutual information. The quantity is of independent interest, for example establishing a rigorous connection between Canonical Correlation Analysis and Wyner's Common Information \cite{Gastpar--Sula2020}.
\item As a corollary, our proof technique also solves a long-standing open problem concerning the Gaussian Gray-Wyner network. Specifically, we establish the optimality of Gaussian auxiliaries for the latter.
\end{enumerate}
 
\subsection{Notation}

We use the following notation. Random variables are denoted by uppercase letters such as $X$ and their realizations by lowercase letters such as $x.$ The alphabets in which they take their values will be denoted by calligraphic letters such as ${\cal X}.$ Random column vectors are denoted by boldface uppercase letters and their realizations by boldface lowercase letters. Depending on the context we will denote the random column vector also as $X^n:=(X_1,X_2,\dots,X_n)$. We denote matrices with uppercase letters, e.g., $A,B,C$. 
For the cross-covariance matrix of $\ve X$ and $\ve Y$, we use the shorthand notation $K_{\ve X \ve Y}$, and for the covariance matrix of a random vector $\ve X$ we use the shorthand notation $K_{\ve X}:= K_{\ve X \ve X}$. In slight abuse of notation, we will let $K_{(X,Y)}$ denote the covariance matrix of the stacked vector $(X,Y)^T.$ We denote the identity matrix of dimension $2\times2$ with $I_2$ and the Kullback-Leibler divergence with $D(.||.)$. The diagonal matrix is denoted by $\diag(.)$. We denote $\log^{+}{(x)}=\max(\log{x},0)$.

\section{Preliminaries}

\subsection{Wyner's Common Information}\label{Sec-WCI-def}

Wyner's common information is defined for two random variables $X$ and $Y$ of arbitrary fixed joint distribution $p(x,y).$

\begin{definition}\label{def-Wyner}
For random variables $X$ and $Y$ with joint distribution $p(x,y),$ Wyner's common information is defined as
\begin{align}
C(X; Y) &=  \inf_{p(w|x,y)} I(X,Y ; W) \mbox{ such that } I(X;Y|W) =0. \label{eq-def-WCI}
\end{align}
\end{definition}

Wyner's common information satisfies a number of interesting properties. We state some of them below in Lemmas~\ref{Lemma-relWyner-basicprops} and~\ref{thm:gensplit} for a generalized definition given in Definition~\ref{def-Wyner-relaxed}.


We note that explicit formulas for Wyner's common information are known only for a small number of special cases.
The case of the doubly symmetric binary source is solved completely in~\cite{Wyner} and can be written as
\begin{align}
 C(X;Y) &= 1 + h_b(a_0) - 2 h_b\left(\frac{1-\sqrt{1-2a_0}}{2}\right),    \label{Eq-WCI-binary-WynerParameterization}
\end{align}
where $a_0$ denotes the probability that the two sources are unequal (assuming without loss of generality $a_0\le \frac{1}{2}$).
In this case, the optimizing $W$ is Equation~\eqref{eq-def-WCI} can be chosen to be binary.
Further special cases of discrete-alphabet sources appear in~\cite{Witsenhausen:76}. 

Moreover, when $X$ and $Y$ are jointly Gaussian with correlation coefficient $\rho,$ then
$C(X;Y) = \frac{1}{2} \log \frac{1 + |\rho|}{1-|\rho|}.$
Note that for this example, $I(X;Y) = \frac{1}{2} \log \frac{1}{1-\rho^2}.$
This case was solved in~\cite{Xu--Liu--Chen,Xu--Liu--Chen--2016} using a parameterization of conditionally independent distributions.
We note that an alternative proof follows from our arguments below.

\subsection{A Natural Relaxation of Wyner's Common Information}

Wyner, in~\cite[Section 4.2]{1055346}, defines an auxiliary quantity $\Gamma(\delta_1,\delta_2).$
Starting from this definition, it is natural to introduce the following quantity:

\begin{definition}\label{def-Wyner-relaxed}
For jointly continuous random variables $X$ and $Y$ with joint distribution $p(x,y),$ we define
\begin{align}
C_{\gamma} (X; Y) &=  \inf_{p(w|x,y)} I(X,Y ; W) \mbox{ such that } I(X;Y|W) \le \gamma. \label{Eq-def-Wyner-relaxed}
\end{align}
\end{definition}

With respect to~\cite[Section 4.2]{1055346}, we have that $C_{\gamma} (X; Y) = H(X,Y)-\Gamma(0,\gamma).$
Comparing Definitions~\ref{def-Wyner} and~\ref{def-Wyner-relaxed},
we see that in $C_{\gamma} (X; Y),$ the constraint of conditional independence is relaxed into an upper bound on the conditional mutual information.
Specifically, for $\gamma=0,$ we have $C_0(X;Y) = C(X;Y),$ the regular Wyner's common information.
In this sense, it is tempting to refer to $C_{\gamma} (X; Y)$ as {\it relaxed}  Wyner's common information.
The following lemma summarizes some basic properties.

\begin{lemma}\label{Lemma-relWyner-basicprops}
$C_{\gamma} (X; Y)$ satisfies the following basic properties:
\begin{enumerate}
\item $C_{\gamma} (X; Y) \ge \max\{ I(X;Y)-\gamma, 0\}.$
\item Data processing inequality: If $X-Y-Z$ form a Markov chain, then $C_\gamma(X;Z)\le \min\{ C_\gamma(X;Y), C_\gamma(Y;Z)\}.$
\item $C_\gamma(X;Y)$ is a convex and continuous function of $\gamma$ for $\gamma \ge 0.$
\item If $Z$ is independent of $(X,Y),$ then $C_\gamma((X,Z);Y)= C_{\gamma} (X;Y).$
\end{enumerate}
\end{lemma}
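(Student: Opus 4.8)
The plan is to prove the four items in order, each by a short, self-contained argument. For item (1), I would bound $I(X,Y;W)$ from below using the chain rule and the definition of conditional mutual information: write $I(X,Y;W) = I(X;W) + I(Y;W\mid X) \ge I(X;W)$, but more directly, use $I(X;Y) = I(X;Y\mid W) + I(X;W) - I(X;W\mid Y) + \dots$; cleaner is the identity $I(X,Y;W) \ge I(X;Y) - I(X;Y\mid W)$, which follows from $I(X;Y) + I(X,Y;W) \ge I(X;Y) + I(W;Y\mid X) = I(X;Y,W) + I(W;Y) \ge I(X;Y\mid W) + 0$ — actually the crispest route is $I(X;W) + I(Y;W\mid X) = I(X,Y;W)$ together with $I(X;Y\mid W) + I(X;W) = I(X;W\mid Y) + I(X;Y) \le I(X,Y;W) + I(X;Y)$ wait — the standard fact is that for any $W$, $I(X,Y;W) \ge I(X;Y) - I(X;Y\mid W)$. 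Granting this, the constraint $I(X;Y\mid W)\le\gamma$ gives $I(X,Y;W)\ge I(X;Y)-\gamma$, and nonnegativity of mutual information gives the other branch. Taking the infimum preserves the bound.

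For item (2), assume $X-Y-Z$. Take any $W$ feasible for $C_\gamma(X;Y)$, i.e. $I(X;Y\mid W)\le\gamma$. One may arrange the joint law so that $W-(X,Y)-Z$; then $X-(Y,W)-Z$ and hence $I(X;Z\mid W)\le I(X;Y\mid W)\le\gamma$ by the data-processing inequality applied conditionally on $W$, while $I(X,Z;W)\le I(X,Y;W)$ since $Z$ is a function-of-noise downstream of $(X,Y)$ — more precisely $I(X,Z;W) \le I(X,Y,Z;W) = I(X,Y;W)$ because $W-(X,Y)-Z$. Thus $W$ is feasible for $C_\gamma(X;Z)$ with no larger cost, giving $C_\gamma(X;Z)\le C_\gamma(X;Y)$; by symmetry of the Markov chain ($Z-Y-X$) and the same argument with the roles swapped, $C_\gamma(X;Z)\le C_\gamma(Y;Z)$, hence the minimum.

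For item (3), convexity in $\gamma$: I would use a time-sharing / concatenation argument. Given $\gamma_1,\gamma_2$ with near-optimal auxiliaries $W_1,W_2$ achieving costs $C_\gamma(X;Y)$ at $\gamma_1,\gamma_2$ respectively, introduce a $\mathrm{Bern}(\lambda)$ switch $Q$ independent of $(X,Y)$, let $W=(W_Q,Q)$ over $\lambda$-fraction i.i.d.-type blocks — but since the definition is single-letter, the clean move is: for a block of length $n$ with $(X^n,Y^n)$ i.i.d. copies of $(X,Y)$, use $W_1$ on a $\lambda$-fraction of coordinates and $W_2$ on the rest; the per-letter conditional mutual information is $\lambda\gamma_1+(1-\lambda)\gamma_2$ and the per-letter cost is $\lambda C_\gamma(X;Y)|_{\gamma_1}+(1-\lambda)C_\gamma(X;Y)|_{\gamma_2}$, and single-letterization (the quantity tensorizes: $C_\gamma$ for $n$ i.i.d. copies equals $n$ times the scalar value, provable by the usual chain-rule/convexity argument) gives convexity. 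Continuity on $\gamma\ge 0$ then follows: a finite convex function on $[0,\infty)$ is automatically continuous on the open half-line, and continuity at $\gamma=0$ needs a separate short argument — take a feasible $W$ for parameter $\gamma$ and show one can perturb toward a feasible-for-$0$ auxiliary with vanishing cost increase, or simply invoke convexity plus the sandwich $\max\{I(X;Y)-\gamma,0\}\le C_\gamma\le C_0$.

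For item (4), if $Z\perp(X,Y)$: the inequality $C_\gamma((X,Z);Y)\le C_\gamma(X;Y)$ holds because any $W$ feasible for the right side (chosen independent of $Z$ given $(X,Y)$, which is possible as $Z$ is independent of everything) satisfies $I((X,Z);Y\mid W)=I(X;Y\mid W)\le\gamma$ and $I((X,Z);W)=I(X;W)\le I(X,Y;W)$ — wait, one needs the cost $I(X,Z;W)$; since $W-(X,Y)-Z$ we get $I(Z;W)=0$ and $I(X,Z;W)=I(X;W)\le I(X,Y;W)$, hmm but the objective for $C_\gamma((X,Z);Y)$ is $I((X,Z),Y;W)=I(X,Y;W)$ as $Z$ is conditionally independent — so the cost is unchanged. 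For the reverse, given $W'$ feasible for $C_\gamma((X,Z);Y)$, it is also feasible for $C_\gamma(X;Y)$: $I(X;Y\mid W')\le I(X,Z;Y\mid W')\le\gamma$ and $I(X,Y;W')\le I(X,Z,Y;W')$ — this direction needs $I(X,Y;W')$ as objective on the left, which is $\le I(X,Z,Y;W')$, the objective on the right, so $C_\gamma(X;Y)\le C_\gamma((X,Z);Y)$; equality follows.

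The main obstacle I anticipate is item (3): convexity and, especially, continuity at the endpoint $\gamma=0$. The tensorization/single-letterization step, while standard, requires care to state cleanly within the single-letter definition, and continuity at $0$ does not follow from convexity alone (a convex function can jump down at a boundary point only if... actually it can only jump \emph{up}, so lower semicontinuity from convexity plus the upper bound $C_\gamma\le C_0$ — which itself needs that a $\gamma$-feasible family can be steered to a $0$-feasible one — is the delicate point). Items (1), (2), (4) are routine manipulations of the chain rule, data-processing, and independence, and I expect each to take only a few lines.
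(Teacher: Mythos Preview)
Your arguments for Items (1), (2), and (4) are essentially the same as the paper's. For Item (1), the identity you eventually land on, $I(X,Y;W)+I(X;Y\mid W)=I(X;Y)+I(X;W\mid Y)+I(Y;W\mid X)\ge I(X;Y)$, is exactly what the paper uses (phrased there as Lagrange duality with multiplier $1$). For Items (2) and (4) your chain-rule and Markov manipulations match the paper's almost line by line.

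Item (3) is where you diverge. The paper does not prove it directly; it simply cites Corollary~4.5 of Wyner's original paper, where the corresponding function $\Gamma(\delta_1,\delta_2)$ is shown to be concave and continuous. Your direct route is more self-contained, and the convexity part is fine---indeed, your \emph{first} instinct, single-letter time-sharing with a $\mathrm{Bern}(\lambda)$ switch $Q$ independent of $(X,Y)$ and auxiliary $W=(Q,W_Q)$, already works: since $Q\perp(X,Y)$ one gets $I(X;Y\mid W)=\lambda I(X;Y\mid W_1)+(1-\lambda)I(X;Y\mid W_2)$ and $I(X,Y;W)=\lambda I(X,Y;W_1)+(1-\lambda)I(X,Y;W_2)$. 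You dismiss this too quickly and retreat to a block/tensorization argument, which is unnecessary and risky here: the tensorization you would invoke is precisely Lemma~\ref{thm:gensplit}, proved later and separately.

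You are right that continuity at $\gamma=0$ is the genuine obstacle and does not follow from convexity alone. But your proposed sandwich $\max\{I(X;Y)-\gamma,0\}\le C_\gamma\le C_0$ does not close it either: as $\gamma\to 0^+$ the lower bound tends to $I(X;Y)$, which is in general strictly below $C_0$, so you do not recover $\liminf_{\gamma\to 0^+}C_\gamma\ge C_0$. A proper argument needs a compactness/lower-semicontinuity step on the space of auxiliaries (so that near-optimal $W_\gamma$ with $I(X;Y\mid W_\gamma)\to 0$ accumulate to a feasible $W$ for $\gamma=0$ without cost increase), or---as the paper does---an appeal to Wyner's original analysis.
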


Proofs are given in Appendix~\ref{app:Lemma-relWyner-basicprops}.

A further property of $C_{\gamma} (X; Y)$ is a tensorization result for independent pairs, which we will use below to solve the case of the Gaussian vector source.

\begin{lemma}[Tensorization] \label{thm:gensplit}
Let $\{ (X_i,Y_i) \}_{i=1}^n$ be $n$ independent pairs of random variables. Then
\begin{align}
C_{\gamma} (X^n; Y^n) &=  \min_{\{\gamma_i\}_{i=1}^n : \sum_{i=1}^n \gamma_i=\gamma}  \sum_{i=1}^n C_{\gamma_i}(X_i; Y_i).  \label{Eqn-thm:gensplit}
\end{align}
\end{lemma}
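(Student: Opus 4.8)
The plan is to prove the two inequalities in \eqref{Eqn-thm:gensplit} separately. The easy direction is ``$\le$'': given any split $\{\gamma_i\}$ with $\sum_i\gamma_i=\gamma$, pick for each $i$ a (near-)optimal auxiliary $W_i$ achieving $C_{\gamma_i}(X_i;Y_i)$ with $I(X_i;Y_i|W_i)\le\gamma_i$, drawn conditionally independently across $i$ given the corresponding pair. Setting $W=(W_1,\dots,W_n)$, independence of the pairs together with the chain rule gives $I(X^n;Y^n|W)=\sum_i I(X_i;Y_i|W_i)\le\sum_i\gamma_i=\gamma$, so $W$ is feasible for $C_\gamma(X^n;Y^n)$; and $I(X^n,Y^n;W)\le\sum_i I(X_i,Y_i;W_i)$ because the pairs $(X_i,Y_i)$ are mutually independent (subadditivity of mutual information with a common argument split into independent blocks). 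Taking the infimum over the choices of $W_i$ and then the minimum over splits yields $C_\gamma(X^n;Y^n)\le\min_{\{\gamma_i\}}\sum_i C_{\gamma_i}(X_i;Y_i)$.

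For the reverse inequality ``$\ge$'', take an arbitrary feasible $W$ for $C_\gamma(X^n;Y^n)$, i.e.\ $I(X^n;Y^n|W)\le\gamma$. Define $\gamma_i:=I(X_i;Y_i|W,X^{i-1},Y^{i-1})$. The key observation is a superadditivity/chain-rule identity: using independence of the pairs one shows
\begin{align}
I(X^n;Y^n|W) &= \sum_{i=1}^n I(X_i;Y_i|W,X^{i-1},Y^{i-1}) = \sum_{i=1}^n \gamma_i, \nonumber
\end{align}
so $\sum_i\gamma_i\le\gamma$ (and we may enlarge one $\gamma_i$ to make it an equality, since $C_{\gamma_i}$ is nonincreasing in $\gamma_i$). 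For the objective, introduce $W_i:=(W,X^{i-1},Y^{i-1})$; then $I(X_i;Y_i|W_i)=\gamma_i$, so $W_i$ is feasible for $C_{\gamma_i}(X_i;Y_i)$, giving $I(X_i,Y_i;W_i)\ge C_{\gamma_i}(X_i;Y_i)$. It remains to check $I(X^n,Y^n;W)\ge\sum_i I(X_i,Y_i;W_i)$; expanding $I(X^n,Y^n;W)=\sum_i I(X_i,Y_i;W\,|\,X^{i-1},Y^{i-1})$ and using that $(X_i,Y_i)$ is independent of $(X^{i-1},Y^{i-1})$, each term equals $I(X_i,Y_i;W,X^{i-1},Y^{i-1})=I(X_i,Y_i;W_i)$. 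Chaining the inequalities yields $I(X^n,Y^n;W)\ge\sum_i C_{\gamma_i}(X_i;Y_i)\ge\min_{\{\gamma_i'\}:\sum\gamma_i'=\gamma}\sum_i C_{\gamma_i'}(X_i;Y_i)$, where the last step uses monotonicity of $C_\cdot$ to pass from $\sum\gamma_i\le\gamma$ to a split summing exactly to $\gamma$. Taking the infimum over feasible $W$ completes the argument, and finiteness/attainment of the minimum over splits follows from continuity of $\gamma_i\mapsto C_{\gamma_i}$ (Lemma~\ref{Lemma-relWyner-basicprops}) on the compact simplex $\{\gamma_i\ge0,\ \sum_i\gamma_i=\gamma\}$.

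The main obstacle is the bookkeeping in the ``$\ge$'' direction: one must choose the enlarged auxiliaries $W_i=(W,X^{i-1},Y^{i-1})$ so that \emph{both} the conditional-mutual-information constraint and the objective decompose correctly, and this works only because the pairs are independent — that independence is exactly what makes $I(X_i,Y_i;W,X^{i-1},Y^{i-1})=I(X_i,Y_i;W\mid X^{i-1},Y^{i-1})$ and lets the cross terms telescope. A secondary technical point is handling the case $\gamma_i<0$, which cannot occur since conditional mutual information is nonnegative, and ensuring measurability of the conditional kernels when passing to the enlarged auxiliary; these are routine. I would also remark that the $\gamma=0$ specialization recovers the known additivity of (unrelaxed) Wyner common information over independent pairs.
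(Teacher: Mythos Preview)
Your overall strategy for the converse is sound and in fact slightly cleaner than the paper's (you take $W_i=(W,X^{i-1},Y^{i-1})$, which makes the objective decompose with \emph{equality} via the chain rule and independence, whereas the paper uses $W_i=(W,X^{i-1})$ and only gets $\ge$ there). However, one displayed step is false as written: the claimed identity
\[
I(X^n;Y^n\mid W)\;=\;\sum_{i=1}^n I(X_i;Y_i\mid W,X^{i-1},Y^{i-1})
\]
does \emph{not} follow from independence of the pairs. Conditioning on $W$ can create cross-dependence between $X_i$ and $Y_j$ for $i\neq j$; e.g.\ with $X_1,Y_1,X_2,Y_2$ i.i.d.\ fair bits and $W=X_1\oplus Y_2$, the left side equals $1$ while both summands on the right vanish. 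What \emph{is} true (and is exactly what you need) is the inequality
\[
I(X^n;Y^n\mid W)\;\ge\;\sum_{i=1}^n I(X_i;Y_i\mid W,X^{i-1},Y^{i-1}),
\]
obtained by expanding $I(X^n;Y^n\mid W)=\sum_i I(X^n;Y_i\mid W,Y^{i-1})$ and dropping the nonnegative cross terms $I(X^{i-1};Y_i\mid W,Y^{i-1})$ and $I(X_{i+1}^n;Y_i\mid W,X^i,Y^{i-1})$. With this correction your $\gamma_i$ satisfy $\sum_i\gamma_i\le\gamma$ and the rest of your argument (monotonicity to pass to an exact split, then the infimum over $W$) goes through unchanged.

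So: replace the ``superadditivity/chain-rule identity'' with the inequality above and drop the phrase ``using independence of the pairs'' at that point---independence is used only in the objective decomposition, not in the constraint. The paper's proof makes the same kind of relaxation in the constraint (its Step~(a)); your version simply trades one harmless inequality for another.
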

The proof is given in Appendix \ref{app:chainsplit}.
The lemma has an intuitive interpretation in $\mathbb{R}^2$ plane. If we express $C_{\gamma} (X^n; Y^n)$ as a region in $\mathbb{R}^2$, which is determined by $(\gamma,C_{\gamma} (X^n; Y^n))$, then the computation of $(\gamma,C_{\gamma} (X^n; Y^n))$ is simply the Minkowski sum of the individual regions which are determined by $(\gamma_i,C_{\gamma_i} (X_i; Y_i))$.

\begin{remark}
Not surprisingly, for probabilistic models beyond independent pairs of random variables,
one cannot generally order the quantities on the left and right hand sides in Equation~\eqref{Eqn-thm:gensplit},  respectively.
To see that the right hand side in Equation~\eqref{Eqn-thm:gensplit} can be an upper bound to the left hand side, suppose first that $X_1=X_2$ and $Y_1=Y_2.$
Then, $C_\gamma(X_1,X_2;Y_1,Y_2) =C_\gamma(X_1;Y_1)\le C_{\gamma/2}(X_1;Y_1)+C_{\gamma/2}(X_2;Y_2)=\min_{\gamma_1+\gamma_2\le\gamma} C_{\gamma_1}(X_1; Y_1) + C_{\gamma_2}(X_2; Y_2),$
since $C_{\gamma}(X;Y)$ is a non-increasing function of $\gamma.$
By contrast, to see that the right hand side in Equation~\eqref{Eqn-thm:gensplit} can be a lower bound to the left hand side, consider now binary random variables and let $X_1$ and $Y_1$ be independent and uniform.
Let $X_2=X_1\oplus Z$ and $Y_2=Y_1\oplus Z,$ where $Z$ is binary uniform and independent, and $\oplus$ denotes modulo-addition.
Then, $C_\gamma(X_1;Y_1)=C_\gamma(X_2;Y_2)=0,$ while $C_\gamma(X_1,X_2;Y_1,Y_2)\ge C_\gamma(Z;Z) = 1-\gamma,$
where the inequality is due to the Data Processing Inequality, i.e., Item 2) of Lemma~\ref{Lemma-relWyner-basicprops},
and it is straightforward to establish that for discrete random variables $Z,$ we have $C_\gamma(Z;Z) = H(Z)-\gamma.$
\end{remark}

\section{The Scalar Gaussian Case}

One of the main technical contributions of this work is a closed-form formula for $C_{\gamma}(X;Y)$ in the case where $X$ and $Y$ are jointly Gaussian.

\begin{theorem} \label{thm:scalar}
When $X$ and $Y$ are jointly Gaussian with correlation coefficient $\rho,$ then
\begin{align}
C_{\gamma}(X;Y) &= \frac{1}{2} \log^+ \left( \frac{1 + |\rho|}{1-|\rho|} \cdot \frac{1 - \sqrt{1-e^{-2\gamma}}}{1 + \sqrt{1-e^{-2\gamma}}} \right).
\end{align}
\end{theorem}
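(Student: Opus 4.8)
The plan is to prove matching upper and lower bounds on $C_\gamma(X;Y)$ for a standardized jointly Gaussian pair (so that $K_{(X,Y)}$ has unit diagonal and off-diagonal $\rho$).

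\textbf{Upper bound (achievability).} First I would exhibit an explicit Gaussian $W$. Take $W = X + Y + N$ or, more symmetrically, $W$ jointly Gaussian with $(X,Y)$ such that conditioned on $W$ the pair $(X,Y)$ remains jointly Gaussian with a reduced correlation coefficient $\rho'$. The conditional mutual information is then $I(X;Y|W) = \frac12\log\frac{1}{1-\rho'^2}$, and to meet the constraint with equality we set $\rho'$ so that $1-\rho'^2 = e^{-2\gamma}$, i.e. $|\rho'| = \sqrt{1-e^{-2\gamma}}$ (valid as long as $\gamma \le \frac12\log\frac{1}{1-\rho^2}$, otherwise $W$ is constant and $C_\gamma = 0$, matching the $\log^+$). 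One then optimizes $I(X,Y;W)$ over Gaussian $W$ achieving this residual correlation; the standard choice is a single scalar Gaussian observation, and a direct covariance computation gives $I(X,Y;W) = \frac12\log\left(\frac{1+|\rho|}{1-|\rho|}\cdot\frac{1-|\rho'|}{1+|\rho'|}\right)$, which is the claimed expression. This part is routine Gaussian algebra once the right $W$ is written down.

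\textbf{Lower bound (converse).} This is the crux and where I expect the real difficulty. Since the constraint set is not convex, one cannot simply invoke convexity of the objective. The approach advertised in the introduction is the \emph{factorization of convex envelopes} technique of Geng--Nair. Concretely, I would consider the functional $\mathbb{E}[\,\text{something}\,]$ whose convex envelope over the distribution of $(X,Y)$ we want to control; the key structural fact is that for a fixed Gaussian marginal, the relevant convex-envelope functional factorizes over tensor products, which forces the optimizing auxiliary to be Gaussian. More precisely, one writes, for any feasible $W=w$, the conditional distribution $p_{XY|W=w}$ and bounds $I(X,Y;W)$ from below by a suitable expectation over $w$ of a concave function of the conditional covariance, using the constraint $I(X;Y|W)\le\gamma$ together with maximum-entropy ($h(X,Y|W)\le \frac12\log((2\pi e)^2\det K_{XY|W})$ type) inequalities. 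The maximum-entropy step replaces the general $W$ by an effective "Gaussian conditional covariance" description, and then the problem reduces to a finite-dimensional optimization over positive semidefinite conditional covariance matrices $\Sigma$ with the prescribed average constraint and with $\frac12\log\frac{\det\Sigma_X \det\Sigma_Y}{\det\Sigma}\le\gamma$. Solving that finite-dimensional problem (a water-filling / Lagrangian argument exploiting the $2\times2$ structure, reducing to scalar conditional variances and a single conditional correlation) yields exactly the claimed formula.

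\textbf{Main obstacle.} The hard part is making the converse rigorous: showing that an \emph{arbitrary} (possibly continuous, high-complexity) auxiliary $W$ cannot beat the Gaussian one. This requires the right auxiliary functional and the factorization/tensorization property so that the enhanced or rotated version of the problem has a Gaussian-optimal solution — essentially the Geng--Nair doubling trick adapted to this constrained setting, combined with the tensorization Lemma~\ref{thm:gensplit} to pass between scalar and vector statements consistently. Once the converse functional is identified, matching it to the achievability expression is a short computation. I would therefore spend most of the effort setting up the convex-envelope functional and verifying its tensorization, then close the argument with the water-filling optimization over $2\times2$ conditional covariances.
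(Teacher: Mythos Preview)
Your achievability sketch is fine and matches the paper's: choose a scalar Gaussian $W$ so that the conditional correlation becomes $\alpha=\sqrt{1-e^{-2\gamma}}$, and compute $I(X,Y;W)$ directly.

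The converse plan has a genuine gap. You propose to bound $I(X,Y;W)$ from below by applying maximum-entropy inequalities to the conditional covariances $K_{XY|W=w}$, and simultaneously to translate the constraint $I(X;Y|W)\le\gamma$ into a constraint on those same covariances. The first step works (it lower-bounds the objective), but the second does not: for a non-Gaussian conditional law with covariance $\Sigma$, the conditional mutual information $I(X;Y|W=w)$ is \emph{not} controlled by $\tfrac12\log\frac{\Sigma_{11}\Sigma_{22}}{\det\Sigma}$ in the direction you need, so you cannot pass from the mutual-information constraint to a covariance constraint by max-entropy alone. The ``per-$w$'' reduction therefore stalls before you ever get to a finite-dimensional problem.

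What the paper actually does is dualize \emph{first}: by weak duality,
\[
C_\gamma(X;Y)\ \ge\ h(X,Y)-\mu\gamma+\mu\,\inf_W\Big[h(X|W)+h(Y|W)-\big(1+\tfrac{1}{\mu}\big)h(X,Y|W)\Big],
\]
so that the constraint is absorbed and one is left with an \emph{unconstrained} infimum of the single functional $h(X|W)+h(Y|W)-(1+\lambda)h(X,Y|W)$ with $\lambda=1/\mu$. It is precisely this combination to which the Geng--Nair factorization applies (Theorem~\ref{Thm:Hypercontract}): the doubling argument shows the infimum is attained by a Gaussian pair $(X',Y')$ with covariance $K'\preceq K$, with no need for any max-entropy step. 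Lemma~\ref{lem:lemmasymmery} then solves the finite-dimensional minimization over $K'$, and maximizing over $\mu$ recovers the formula. Your plan had the right tool (factorization of the convex envelope) but applied it to the wrong object; the missing idea is the Lagrangian reformulation that isolates this specific entropy combination. Also, the tensorization Lemma~\ref{thm:gensplit} plays no role in the scalar converse---it is used only later to lift the scalar result to vectors.
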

The proof is given below in Section~\ref{sec:scalar}.


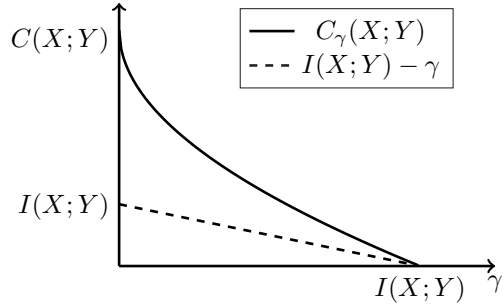
\begin{figure}[ht]%
\centering
\scalebox{1}{
%
%
\begin{tikzpicture}
\draw[->, line width=1pt](0,0)--(5,0);
\draw[->, line width=1pt](0,0)--(0,3.5);
\draw[black,thick,anchor=north] (5,0) node{$\gamma$};
\draw[black,thick,anchor=north] (4,0) node{$I(X;Y)$};
\draw[black,thick,anchor=east] (0,3) node{$C(X;Y)$};
\draw[black,thick,anchor=east] (0,0.8) node{$I(X;Y)$};
\pgfplotsset{ticks=none}
\begin{axis}[%
hide y axis,
hide x axis,
width=5in,
height=4in,
xmin=0,
xmax=0.4,
ymin=0,
ymax=1.5,
xlabel={$\gamma$},
legend style={at={(0.4,0.4)}, legend cell align=center, align=center, draw=white!15!black}
]
\addplot [line width=1pt, color=black, mark=--, mark options={solid, black}]
  table[row sep=crcr]{%
0	0.549306144334055\\
0.001	0.504577330851589\\
0.002	0.48603950717294\\
0.003	0.471807741773195\\
0.004	0.459803784846008\\
0.005	0.449222790204602\\
0.006	0.439652055528613\\
0.007	0.430846458616641\\
0.008	0.422646315849692\\
0.009	0.414940729296579\\
0.01	0.40764896840636\\
0.011	0.400710098342209\\
0.012	0.394076786712141\\
0.013	0.387711398833854\\
0.014	0.381583425740229\\
0.015	0.375667727857034\\
0.016	0.369943298859309\\
0.017	0.364392372959695\\
0.018	0.358999765752831\\
0.019	0.353752378014277\\
0.02	0.348638815768086\\
0.021	0.34364909496579\\
0.022	0.338774408828048\\
0.023	0.334006942327358\\
0.024	0.329339722638976\\
0.025	0.324766497387931\\
0.026	0.320281634627587\\
0.027	0.315880039989489\\
0.028	0.311557087533705\\
0.029	0.307308561628681\\
0.03	0.30313060778399\\
0.031	0.299019690806169\\
0.032	0.294972558987265\\
0.033	0.290986213296099\\
0.034	0.287057880743821\\
0.035	0.283184991252743\\
0.036	0.279365157481271\\
0.037	0.275596157156003\\
0.038	0.271875917540432\\
0.039	0.268202501732758\\
0.04	0.264574096536266\\
0.041	0.260989001687187\\
0.042	0.257445620258891\\
0.043	0.253942450089151\\
0.044	0.250478076100256\\
0.045	0.247051163400921\\
0.046	0.24366045107487\\
0.047	0.240304746574386\\
0.048	0.236982920648341\\
0.049	0.233693902743737\\
0.05	0.230436676827826\\
0.051	0.227210277584757\\
0.052	0.22401378694651\\
0.053	0.220846330922901\\
0.054	0.217707076699762\\
0.055	0.214595229978062\\
0.056	0.21151003252998\\
0.057	0.208450759950697\\
0.058	0.205416719587092\\
0.059	0.202407248626617\\
0.06	0.199421712331485\\
0.061	0.196459502404876\\
0.062	0.193520035477317\\
0.063	0.190602751702596\\
0.064	0.187707113453683\\
0.065	0.184832604110089\\
0.066	0.181978726928954\\
0.067	0.179145003992894\\
0.068	0.176330975228345\\
0.069	0.173536197488699\\
0.07	0.17076024369709\\
0.071	0.16800270204416\\
0.072	0.165263175236547\\
0.073	0.162541279792245\\
0.074	0.159836645379292\\
0.075	0.157148914194606\\
0.076	0.154477740379991\\
0.077	0.151822789472672\\
0.078	0.149183737887863\\
0.079	0.146560272431142\\
0.08	0.143952089838538\\
0.081	0.141358896342451\\
0.082	0.138780407261643\\
0.083	0.136216346613686\\
0.084	0.133666446748395\\
0.085	0.131130448000857\\
0.086	0.128608098362815\\
0.087	0.126099153171202\\
0.088	0.123603374812778\\
0.089	0.121120532443843\\
0.09	0.118650401724098\\
0.091	0.116192764563804\\
0.092	0.113747408883422\\
0.093	0.111314128384998\\
0.094	0.108892722334592\\
0.095	0.106482995355118\\
0.096	0.104084757228979\\
0.097	0.101697822709951\\
0.098	0.0993220113437749\\
0.099	0.0969571472969869\\
0.1	0.0946030591935194\\
0.101	0.0922595799586462\\
0.102	0.0899265466698777\\
0.103	0.0876038004144262\\
0.104	0.0852911861528944\\
0.105	0.0829885525888578\\
0.106	0.0806957520440319\\
0.107	0.078412640338736\\
0.108	0.0761390766773796\\
0.109	0.0738749235387187\\
0.11	0.0716200465706396\\
0.111	0.0693743144892443\\
0.112	0.0671375989820256\\
0.113	0.0649097746149301\\
0.114	0.0626907187431213\\
0.115	0.0604803114252628\\
0.116	0.0582784353411551\\
0.117	0.0560849757125667\\
0.118	0.0538998202271082\\
0.119	0.0517228589650097\\
0.12	0.0495539843286661\\
0.121	0.0473930909748236\\
0.122	0.0452400757492883\\
0.123	0.0430948376240425\\
0.124	0.0409572776366614\\
0.125	0.0388272988319277\\
0.126	0.0367048062055493\\
0.127	0.034589706649885\\
0.128	0.0324819089015959\\
0.129	0.0303813234911347\\
0.13	0.0282878626939989\\
0.131	0.0262014404836709\\
0.132	0.0241219724861751\\
0.133	0.0220493759361835\\
0.134	0.019983569634608\\
0.135	0.0179244739076175\\
0.136	0.0158720105670194\\
0.137	0.0138261028719536\\
0.138	0.0117866754918457\\
0.139	0.00975365447056828\\
0.14	0.0077269671917613\\
0.141	0.00570654234527018\\
0.142	0.00369230989465424\\
0.143	0.00168420104572345\\
};
\addlegendentry{$C_{\gamma}(X;Y)$}

\addplot [line width=1pt, dashed]
table[row sep=crcr]{%
0	0.14384103622589\\
0.001	0.14284103622589\\
0.002	0.14184103622589\\
0.003	0.14084103622589\\
0.004	0.13984103622589\\
0.005	0.13884103622589\\
0.006	0.13784103622589\\
0.007	0.13684103622589\\
0.008	0.13584103622589\\
0.009	0.13484103622589\\
0.01	0.13384103622589\\
0.011	0.13284103622589\\
0.012	0.13184103622589\\
0.013	0.13084103622589\\
0.014	0.12984103622589\\
0.015	0.12884103622589\\
0.016	0.12784103622589\\
0.017	0.12684103622589\\
0.018	0.12584103622589\\
0.019	0.12484103622589\\
0.02	0.12384103622589\\
0.021	0.12284103622589\\
0.022	0.12184103622589\\
0.023	0.12084103622589\\
0.024	0.11984103622589\\
0.025	0.11884103622589\\
0.026	0.11784103622589\\
0.027	0.11684103622589\\
0.028	0.11584103622589\\
0.029	0.11484103622589\\
0.03	0.11384103622589\\
0.031	0.11284103622589\\
0.032	0.11184103622589\\
0.033	0.11084103622589\\
0.034	0.10984103622589\\
0.035	0.10884103622589\\
0.036	0.10784103622589\\
0.037	0.10684103622589\\
0.038	0.10584103622589\\
0.039	0.10484103622589\\
0.04	0.10384103622589\\
0.041	0.10284103622589\\
0.042	0.10184103622589\\
0.043	0.10084103622589\\
0.044	0.0998410362258905\\
0.045	0.0988410362258905\\
0.046	0.0978410362258905\\
0.047	0.0968410362258905\\
0.048	0.0958410362258904\\
0.049	0.0948410362258904\\
0.05	0.0938410362258904\\
0.051	0.0928410362258904\\
0.052	0.0918410362258904\\
0.053	0.0908410362258905\\
0.054	0.0898410362258905\\
0.055	0.0888410362258905\\
0.056	0.0878410362258905\\
0.057	0.0868410362258905\\
0.058	0.0858410362258905\\
0.059	0.0848410362258905\\
0.06	0.0838410362258905\\
0.061	0.0828410362258905\\
0.062	0.0818410362258905\\
0.063	0.0808410362258905\\
0.064	0.0798410362258904\\
0.065	0.0788410362258904\\
0.066	0.0778410362258904\\
0.067	0.0768410362258904\\
0.068	0.0758410362258904\\
0.069	0.0748410362258904\\
0.07	0.0738410362258904\\
0.071	0.0728410362258904\\
0.072	0.0718410362258904\\
0.073	0.0708410362258904\\
0.074	0.0698410362258904\\
0.075	0.0688410362258904\\
0.076	0.0678410362258904\\
0.077	0.0668410362258904\\
0.078	0.0658410362258904\\
0.079	0.0648410362258904\\
0.08	0.0638410362258904\\
0.081	0.0628410362258904\\
0.082	0.0618410362258904\\
0.083	0.0608410362258904\\
0.084	0.0598410362258904\\
0.085	0.0588410362258904\\
0.086	0.0578410362258904\\
0.087	0.0568410362258904\\
0.088	0.0558410362258904\\
0.089	0.0548410362258904\\
0.09	0.0538410362258904\\
0.091	0.0528410362258904\\
0.092	0.0518410362258904\\
0.093	0.0508410362258904\\
0.094	0.0498410362258904\\
0.095	0.0488410362258904\\
0.096	0.0478410362258904\\
0.097	0.0468410362258904\\
0.098	0.0458410362258904\\
0.099	0.0448410362258904\\
0.1	0.0438410362258904\\
0.101	0.0428410362258904\\
0.102	0.0418410362258904\\
0.103	0.0408410362258904\\
0.104	0.0398410362258904\\
0.105	0.0388410362258904\\
0.106	0.0378410362258904\\
0.107	0.0368410362258904\\
0.108	0.0358410362258904\\
0.109	0.0348410362258904\\
0.11	0.0338410362258904\\
0.111	0.0328410362258904\\
0.112	0.0318410362258904\\
0.113	0.0308410362258904\\
0.114	0.0298410362258904\\
0.115	0.0288410362258904\\
0.116	0.0278410362258904\\
0.117	0.0268410362258904\\
0.118	0.0258410362258904\\
0.119	0.0248410362258904\\
0.12	0.0238410362258904\\
0.121	0.0228410362258904\\
0.122	0.0218410362258904\\
0.123	0.0208410362258904\\
0.124	0.0198410362258904\\
0.125	0.0188410362258905\\
0.126	0.0178410362258904\\
0.127	0.0168410362258904\\
0.128	0.0158410362258904\\
0.129	0.0148410362258904\\
0.13	0.0138410362258904\\
0.131	0.0128410362258904\\
0.132	0.0118410362258904\\
0.133	0.0108410362258904\\
0.134	0.00984103622589044\\
0.135	0.00884103622589044\\
0.136	0.00784103622589044\\
0.137	0.00684103622589044\\
0.138	0.00584103622589044\\
0.139	0.00484103622589044\\
0.14	0.00384103622589044\\
0.141	0.00284103622589044\\
0.142	0.00184103622589044\\
0.143	0.000841036225890435\\
};
\addlegendentry{$I(X;Y) -\gamma$}
\end{axis}
\end{tikzpicture}
\caption{$C_{\gamma}(X;Y)$ for jointly Gaussian $X$ and $Y$ for the case $\rho=1/2,$ thus, we have $C(X;Y)=\log \sqrt{3}$ and $I(X;Y)=\log (2/\sqrt{3}).$ The dashed line is the lower bound from Lemma~\ref{Lemma-relWyner-basicprops}, Item 1).}
\label{fig:lowerbound}
\end{figure}


\subsection{Preliminary Results for the Proof of Theorem~\ref{thm:scalar}} \label{sec:scalar:prelim}

The following results are used as intermediate tools in the proof of the main results.
\begin{theorem} \label{Thm:Hypercontract}
For $K \succeq 0$, $0 < \lambda < 1$, there exists a $0\preceq K^{\prime} \preceq K$ and $(X^{\prime},Y^{\prime})\sim \mathcal{N}(0,K^{\prime})$ such that $(X, Y)\sim p_{X,Y}$ with covariance matrix $K$ the following inequality holds:
\begin{align}
\inf_W h(Y|W)+h(X|W) - (1+\lambda) h(X, Y|W)  \geq h(Y^{\prime})+ h(X^{\prime}) -(1+\lambda)h(X^{\prime}, Y^{\prime}) . \label{Eq-Thm:Hypercontract}
\end{align}
\end{theorem}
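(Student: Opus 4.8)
The plan is to recognize the left-hand side of~\eqref{Eq-Thm:Hypercontract} as (the negative of) a conditional version of an expression whose unconditioned convex envelope is already understood, and then invoke the factorization-of-convex-envelopes machinery of Geng--Nair~\cite{Geng--Nair}. Concretely, for a fixed $\lambda\in(0,1)$ define, for any pair of jointly distributed $(X,Y)$,
\begin{align}
F_\lambda(p_{X,Y}) &= (1+\lambda)\,h(X,Y) - h(X) - h(Y),
\end{align}
so that the infimum in~\eqref{Eq-Thm:Hypercontract} equals $-\sup_W \big(-F_\lambda(p_{X,Y|W})\big) = \inf_W F_\lambda(p_{X,Y|W})$, i.e. the lower convex envelope of $F_\lambda$ evaluated at $p_{X,Y}$. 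The assertion to prove is that this lower convex envelope is attained (or lower bounded in the required way) by a \emph{Gaussian} perturbation $K'\preceq K$, i.e. that the extreme points achieving the envelope can be taken jointly Gaussian with the same mean and a covariance matrix dominated by $K$.

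The key steps, in order, would be: (i) Show that $F_\lambda$ is bounded below over all distributions with covariance $\preceq K$ (this uses $\lambda<1$: the term $(1+\lambda)h(X,Y)$ grows slower than $h(X)+h(Y)$ could shrink, and the maximum-entropy property of the Gaussian controls the individual entropies), so the convex envelope is well defined and finite. (ii) Argue that, by a rotation/scaling reduction, it suffices to treat the case where $K$ is the covariance of a standard correlated pair; alternatively keep $K$ general and carry the matrix $K'$ through. (iii) Apply the Geng--Nair factorization tactic: introduce an i.i.d.\ copy and use a doubling/tensorization argument on the functional $\inf_W F_\lambda(p_{X,Y|W})$, together with the rotation invariance of the Gaussian and the scalar two-letter SDPI-type inequality, to conclude that an optimal $W$ induces a jointly Gaussian conditional law. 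This is exactly the route by which hypercontractivity-flavored extremal problems over Gaussians are resolved, and the $(1+\lambda)$ weighting is the analog of the hypercontractive exponent. (iv) Finally, identify the optimal Gaussian: among jointly Gaussian $(X',Y')$ with $K_{(X',Y')}\preceq K$, the quantity $(1+\lambda)h(X',Y')-h(X')-h(Y') = \tfrac12\log\big((1+\lambda)^{?}\cdots\big)$ reduces to a one-parameter optimization over the residual correlation, solvable in closed form by differentiation; the optimum is a rank-deficient or interior $K'$ depending on how $\lambda$ compares to a threshold determined by $\rho$. The point of the theorem for the sequel is only existence of such a $K'$, so step (iv) can be kept light.

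I expect the main obstacle to be step (iii): making the factorization-of-convex-envelopes argument rigorous in the continuous (differential-entropy) setting, where one must control integrability, justify the tensorization of the variational problem $\inf_W$, and handle the possibility that the optimal $W$ degenerates (forcing $K'$ to a boundary of $\{0\preceq K'\preceq K\}$, including the rank-one or zero cases). A secondary technical nuisance is that $h(X,Y|W)$, $h(X|W)$, $h(Y|W)$ need not individually be finite for arbitrary $p(w|x,y)$, so the expression must be interpreted as a difference and one should restrict to $W$ for which the relevant conditional covariances stay bounded by $K$ — a restriction that is without loss of generality because enlarging the conditional covariance only hurts the objective when $\lambda<1$. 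Once the conditional law is shown to be Gaussian with covariance $\preceq K$, the inequality~\eqref{Eq-Thm:Hypercontract} is immediate by taking $(X',Y')$ to be that conditional law (or its optimizing value), and the proof is complete.
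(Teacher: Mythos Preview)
Your proposal is correct and follows essentially the same route as the paper: define the functional, establish that the infimum is attained (the paper does this via tightness, Prokhorov, and lower semicontinuity of $\ell_\lambda$), and then run the Geng--Nair doubling argument on two i.i.d.\ copies with the sum/difference rotation $X_A=(X_1+X_2)/\sqrt 2$, $X_B=(X_1-X_2)/\sqrt 2$ to force $I(X_A,Y_A;X_B,Y_B\mid W_1,W_2)=0$, after which a Bernstein-type characterization (the paper uses Ghurye--Olkin) yields a jointly Gaussian conditional law with covariance $K'\preceq K$. Your step~(iv) is indeed unnecessary for this theorem, and one small inaccuracy: the restriction to $W$ with conditional covariance $\preceq K$ is automatic by the law of total covariance, not a modeling choice.
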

\begin{proof}
The theorem is a consequence of \cite[Theorem~2]{Hyper_Gauss}, for a specific choice of $p=\frac{1}{\lambda}+1$. The rest of the proof is given in Appendix \ref{app:HyperGauss}.
\end{proof}

To leverage Theorem~\ref{Thm:Hypercontract}, we need to understand the covariance matrix $K'.$
In~\cite{Hyper_Gauss}, the right hand side in Equation~\eqref{Eq-Thm:Hypercontract} is further lower bounded as $h(Y)+ h(X) -(1+\lambda)h(X, Y)$, where $(X,Y)\sim \mathcal{N}(0,K)$ (correlation coefficient of matrix $K$ is $\rho$ and the diagonal entries are unity), which holds for $\lambda < \rho.$
This choice establishes the hypercontractivity bound $(1+\rho)I(W;X,Y) \geq I(W;X)+ I(W;Y)$ (for jointly Gaussian $X,Y$ and any $W$).
Unfortunately, for the problem of Wyner's common information, this leads to a loose lower bound, which can be seen as follows:
\begin{align}
C_{\gamma=0}(X;Y)&= \inf_{p(w|x,y):I(X;Y|W) =0} I(X,Y;W) \\
&= \inf_{p(w|x,y):I(X,Y;W)+I(X;Y)-I(W;X)-I(W;Y) =0} I(X,Y;W) \\
& \geq \inf_{p(w|x,y):I(X;Y)-\rho I(X,Y;W) \leq 0} I(X,Y;W)  \label{eqn:appmutualcontractivity}\\
&=\inf_{p(w|x,y):I(X,Y;W) \geq \frac{I(X;Y)}{\rho}} I(X,Y;W) \\
&=\frac{I(X;Y)}{\rho}=\frac{1}{2}\frac{\log{\frac{1}{1-\rho^2}}}{\rho},
\end{align}
where (\ref{eqn:appmutualcontractivity}) follows from $(1+\rho)I(W;X,Y) \geq I(W;X)+ I(W;Y)$.

We now show that by a different lower bound on the right hand side in Equation~\eqref{Eq-Thm:Hypercontract},
we can indeed get a tight lower bound for the problem of Wyner's common information as well as its relaxation $C_{\gamma}(X;Y).$
Specifically, we have the following lower bound:

\begin{lemma} \label{lem:lemmasymmery}
For $(X^{\prime},Y^{\prime})\sim \mathcal{N}(0,K^{\prime})$, the following inequality holds
\begin{align}
\min_{K^{\prime}: 0 \preceq K^{\prime} \preceq \begin{pmatrix} 1 & \rho \\ \rho &1  \end{pmatrix}} h(X^{\prime})+h(Y^{\prime})-(1+\lambda)h(X^{\prime},Y^{\prime}) \geq  \frac{1}{2} \log{\frac{1}{1-\lambda^2}}-\frac{\lambda}{2} \log{(2\pi e)^2\frac{(1-\rho)^2(1+\lambda)}{1-\lambda}},
\end{align}
where $\lambda \leq \rho$.
\end{lemma}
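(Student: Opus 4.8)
The plan is to turn the matrix minimization into a scalar one by a symmetrization argument and then finish with elementary calculus. Throughout we may assume $0<\lambda\le\rho<1$ (for $\lambda=0$ both sides vanish, the hypothesis $\rho\ge\lambda>0$ forces $\rho>0$, and the degenerate endpoint $\rho=1$ is excluded). Write $K'=\begin{pmatrix} a & c\\ c & b\end{pmatrix}$. Since $(X',Y')$ is Gaussian, $h(X')=\tfrac12\log(2\pi e\,a)$, $h(Y')=\tfrac12\log(2\pi e\,b)$ and $h(X',Y')=\tfrac12\log\big((2\pi e)^2(ab-c^2)\big)$, so the objective equals
\[
f(a,b,c):=\tfrac12\log(ab)-\tfrac{1+\lambda}{2}\log(ab-c^2)-\lambda\log(2\pi e),
\]
and the constraint $0\preceq K'\preceq K$ becomes $0\le a,b\le1$, $ab\ge c^2$, and $(1-a)(1-b)\ge(\rho-c)^2$. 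Matrices with $ab-c^2=0$ give $f=+\infty$, so we may restrict to $ab>c^2$.

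The first step is to reduce to $a=b$. If $(a,b,c)$ is feasible, so is the swap $(b,a,c)$, with the same value of $f$ (which depends on $a,b$ only through $ab$); since $\{K':0\preceq K'\preceq K\}$ is convex, the average $\big(\tfrac{a+b}{2},\tfrac{a+b}{2},c\big)$ is feasible. Put $u=ab$, $v=ab-c^2$, $\delta=\big(\tfrac{a-b}{2}\big)^2\ge0$; the averaged matrix has $a'b'=u+\delta$ and $a'b'-c^2=v+\delta$, so
\[
f(a,b,c)-f\Big(\tfrac{a+b}{2},\tfrac{a+b}{2},c\Big)=\tfrac12\log\tfrac{u}{u+\delta}-\tfrac{1+\lambda}{2}\log\tfrac{v}{v+\delta}\ \ge\ 0,
\]
because $\big(\tfrac{v}{v+\delta}\big)^{1+\lambda}\le\tfrac{v}{v+\delta}\le\tfrac{u}{u+\delta}$, using $0<v\le u$, $1+\lambda\ge1$, and monotonicity of $x\mapsto x/(x+\delta)$. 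Hence the minimum is attained with $a=b=:t\in(0,1]$.

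It remains to solve the scalar problem. With $a=b=t$ the objective $\log t-\tfrac{1+\lambda}{2}\log(t^2-c^2)-\lambda\log(2\pi e)$ is increasing in $|c|$, so we take $|c|$ as small as the constraints $|c|\le t$ and $|\rho-c|\le1-t$ permit. For $t\le1-\rho$ the choice $c=0$ is admissible, giving value $-\lambda\log(2\pi e)-\lambda\log t$, decreasing in $t$. For $t>1-\rho$ the smallest admissible $|c|$ is $c=t-(1-\rho)$, giving $t^2-c^2=(1-\rho)(2t+\rho-1)$ and objective $g(t)=\log t-\tfrac{1+\lambda}{2}\log\big((1-\rho)(2t+\rho-1)\big)-\lambda\log(2\pi e)$. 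Solving $g'(t)=0$ yields $t^\star=\tfrac{1-\rho}{1-\lambda}$, which lies in $(1-\rho,1]$ exactly because $0<\lambda\le\rho$; since $g''(t^\star)>0$ and $g'(1-\rho)<0$, $g$ decreases then increases on $[1-\rho,1]$, so $t^\star$ is the global minimizer. At $t^\star$ one has $2t^\star+\rho-1=(1+\lambda)t^\star$, hence $t^{\star2}-c^{\star2}=\tfrac{(1-\rho)^2(1+\lambda)}{1-\lambda}$, and substituting into $g$ and regrouping logarithms gives exactly $\tfrac12\log\tfrac{1}{1-\lambda^2}-\tfrac{\lambda}{2}\log\big((2\pi e)^2\tfrac{(1-\rho)^2(1+\lambda)}{1-\lambda}\big)$, which proves the claim (in fact with equality).

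The crux is the symmetrization: averaging $a$ and $b$ \emph{increases} $ab$, so it is not a priori clear that $f$ decreases, and the inequality $\big(v/(v+\delta)\big)^{1+\lambda}\le u/(u+\delta)$ is precisely what rescues it — this is the one place where $1+\lambda>1$ is essential. After that, the scalar optimization is routine, the only care being the split into the two regimes $t\le1-\rho$ and $t>1-\rho$ and checking that $t^\star\le1$, which is exactly where the hypothesis $\lambda\le\rho$ is used.
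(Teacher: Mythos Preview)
Your proof is correct and takes a genuinely different route from the paper. The paper parametrizes $K'=\begin{pmatrix}\sigma_X^2 & q\sigma_X\sigma_Y\\ q\sigma_X\sigma_Y & \sigma_Y^2\end{pmatrix}$, uses the AM--GM inequality $\sigma_X^2+\sigma_Y^2\ge 2\sigma_X\sigma_Y$ to \emph{enlarge} the feasible set (passing from $\mathcal{A}_\rho$ to a relaxed set $\mathcal{D}_\rho$ depending only on $\sigma^2=\sigma_X\sigma_Y$ and $q$), and then solves the relaxed problem by Lagrangian duality/KKT, splitting into the cases $q\le\rho$ and $q>\rho$. You instead stay inside the original feasible set and use convexity of $\{K':0\preceq K'\preceq K\}$ together with the symmetry $(a,b,c)\mapsto(b,a,c)$ to average down to $a=b$; the key new ingredient is the clean inequality $\bigl(\tfrac{v}{v+\delta}\bigr)^{1+\lambda}\le\tfrac{v}{v+\delta}\le\tfrac{u}{u+\delta}$, which is exactly what makes the averaged point better despite $ab$ increasing. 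After that your case split is in $t$ (namely $t\le1-\rho$ versus $t>1-\rho$) rather than in $q$, and you finish by elementary calculus with no appeal to duality. Both arguments land on the same minimizer ($\sigma_X^2=\sigma_Y^2=\tfrac{1-\rho}{1-\lambda}$, $q=\lambda$), and your argument in fact yields equality in the lemma. Your approach is somewhat more direct and self-contained; the paper's relaxation-plus-KKT argument is more in line with the Lagrangian machinery used elsewhere in the paper and makes the role of the correlation parameter $q$ more explicit.
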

\begin{proof}
The proof is given in Appendix \ref{App:lowerboundmainRWCI}.
\end{proof}

\subsection{Proof of Theorem~\ref{thm:scalar}} \label{sec:scalar} 
The proof of the converse for Theorem \ref{thm:scalar} involves two main steps. In this section, we prove that one optimal distribution is  jointly Gaussian via a variant of the factorization of convex envelope. Then, we tackle the resulting optimization problem with Lagrange duality.  Let us start form the lower bound first.
\begin{lemma}  \label{lemma:LBRWCI}
When X and Y are jointly Gaussian with correlation coefficient $\rho$ and unit variance, then $C_{\gamma}(X;Y)\geq \frac{1}{2}\log^+{ \left(\frac{1+|\rho|}{1-|\rho|} \cdot \frac{1-\sqrt{1-e^{-2\gamma}}}{1+\sqrt{1-e^{-2\gamma}}} \right) } $.
\end{lemma}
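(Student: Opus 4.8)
The plan is to combine the three tools assembled in Section~\ref{sec:scalar:prelim}: Theorem~\ref{Thm:Hypercontract}, Lemma~\ref{lem:lemmasymmery}, and the elementary lower bound $C_\gamma(X;Y)\ge I(X;Y)-\gamma$ from Lemma~\ref{Lemma-relWyner-basicprops}, Item~1. First I would rewrite the constraint $I(X;Y|W)\le\gamma$ in terms of differential entropies. Since $X,Y$ are jointly Gaussian with unit variance and correlation $\rho$, for any feasible $W$ we have
\begin{align}
I(X;Y|W) &= h(X|W)+h(Y|W)-h(X,Y|W)\le\gamma.\nn
\end{align}
Hence for any $\lambda$ with $0<\lambda<1$,
\begin{align}
I(X,Y;W) &= h(X,Y)-h(X,Y|W)\nn\\
&= h(X,Y) + \lambda h(X,Y|W) - h(X|W)-h(Y|W) + \big(h(X|W)+h(Y|W)-(1+\lambda)h(X,Y|W)\big)\nn\\
&\ge h(X,Y) - \lambda\gamma - \lambda\big(h(X|W)+h(Y|W)\big) + \lambda h(X,Y|W) + \lambda\big(h(X|W)+h(Y|W)-(1+\lambda)h(X,Y|W)\big)/\lambda\nn
\end{align}
— more cleanly, I would isolate the quantity $h(X|W)+h(Y|W)-(1+\lambda)h(X,Y|W)$, bound it below by $h(X')+h(Y')-(1+\lambda)h(X',Y')$ via Theorem~\ref{Thm:Hypercontract}, and then bound \emph{that} below by the explicit constant in Lemma~\ref{lem:lemmasymmery} (valid for $\lambda\le\rho$). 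The leftover terms, after using the constraint $h(X|W)+h(Y|W)-h(X,Y|W)\le\gamma$, contribute $-\lambda\gamma$ plus a $\lambda$-multiple of the $I(X;Y|W)$-type expression, which I would handle by also invoking $h(X,Y|W)\le h(X,Y)$ (conditioning reduces entropy) where the sign works out.

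Concretely, the identity I want is
\begin{align}
I(X,Y;W) &= h(X,Y)-h(X,Y|W)\nn\\
&= h(X,Y) - \frac{1}{\lambda}\Big(\lambda h(X,Y|W)\Big)\nn
\end{align}
reorganized as: write $-h(X,Y|W) = \frac{1}{\lambda}\big[h(X|W)+h(Y|W)-(1+\lambda)h(X,Y|W)\big] - \frac{1}{\lambda}\big[h(X|W)+h(Y|W)-h(X,Y|W)\big]$. The first bracket is $\ge$ the Lemma~\ref{lem:lemmasymmery} constant (call it $g(\lambda,\rho)$) by the two preliminary results; the second bracket equals $I(X;Y|W)\le\gamma$, so $-\frac{1}{\lambda}\times(\text{second bracket})\ge -\gamma/\lambda$. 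Therefore
\begin{align}
I(X,Y;W) &\ge h(X,Y) + \frac{1}{\lambda}g(\lambda,\rho) - \frac{\gamma}{\lambda}\nn
\end{align}
uniformly over feasible $W$. Plugging in $h(X,Y)=\frac{1}{2}\log\big((2\pi e)^2(1-\rho^2)\big)$ and the explicit $g(\lambda,\rho)=\frac{1}{2}\log\frac{1}{1-\lambda^2}-\frac{\lambda}{2}\log\big((2\pi e)^2\frac{(1-\rho)^2(1+\lambda)}{1-\lambda}\big)$ gives a lower bound that is an explicit function of $\lambda$; I would then optimize over $0<\lambda\le\rho$. Carrying out the (routine) calculus, the optimal $\lambda$ should satisfy a relation tying $\lambda$ to $\gamma$ — I expect $\lambda = \sqrt{1-e^{-2\gamma}}$, or equivalently $e^{-2\gamma}=1-\lambda^2$ — and substituting back should collapse the expression to $\frac{1}{2}\log\big(\frac{1+|\rho|}{1-|\rho|}\cdot\frac{1-\sqrt{1-e^{-2\gamma}}}{1+\sqrt{1-e^{-2\gamma}}}\big)$. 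Finally, since $C_\gamma(X;Y)\ge 0$ always, I replace $\log$ by $\log^+$, and I take absolute values on $\rho$ by noting the problem is invariant under $Y\mapsto -Y$.

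The main obstacle I anticipate is twofold. First, there is a range issue: the bound from Lemma~\ref{lem:lemmasymmery} requires $\lambda\le\rho$, while the unconstrained optimizer $\lambda=\sqrt{1-e^{-2\gamma}}$ may exceed $\rho$ when $\gamma$ is large — precisely the regime where the $\log^+$ saturates to $0$. I would handle this by checking that when $\sqrt{1-e^{-2\gamma}}\ge|\rho|$ the claimed bound is $0$ (so the trivial bound $C_\gamma\ge0$ suffices), and otherwise the interior optimizer is admissible. Second, the bookkeeping in assembling the entropy identities must be done carefully so that every use of "conditioning reduces entropy" and every application of the constraint points in the correct direction; the sign of the coefficient $1/\lambda$ versus the relaxed hypercontractivity exponent $1+\lambda$ is where an error would most easily creep in. Verifying the final algebraic collapse — that $h(X,Y)+\frac{1}{\lambda}g(\lambda,\rho)-\frac{\gamma}{\lambda}$ evaluated at the optimal $\lambda$ equals the target — is routine but deserves a careful pass, since all the $(2\pi e)^2$ and $(1-\rho)$ versus $(1-\rho^2)$ factors must cancel exactly.
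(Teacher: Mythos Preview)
Your proposal is correct and follows essentially the same route as the paper. The paper phrases the first step as weak Lagrangian duality with multiplier $\mu=1/\lambda$, arriving at $C_\gamma(X;Y)\ge h(X,Y)-\mu\gamma+\mu\inf_W\big[h(X|W)+h(Y|W)-(1+\tfrac{1}{\mu})h(X,Y|W)\big]$, which is exactly your identity $-h(X,Y|W)=\tfrac{1}{\lambda}[h(X|W)+h(Y|W)-(1+\lambda)h(X,Y|W)]-\tfrac{1}{\lambda}I(X;Y|W)$ after the substitution $\mu=1/\lambda$; the subsequent application of Theorem~\ref{Thm:Hypercontract} and Lemma~\ref{lem:lemmasymmery}, the optimization yielding $\lambda_*=\sqrt{1-e^{-2\gamma}}$ (equivalently $\mu_*=1/\sqrt{1-e^{-2\gamma}}$), and your treatment of the boundary case $\lambda_*>|\rho|$ via the trivial bound $C_\gamma\ge 0$ all match the paper.
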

\begin{proof}
The lower bound is derived in the following lines
\begin{align}
C_{\gamma}(X;Y)&=\inf_{W:I(X;Y|W) \leq \gamma} I(X,Y;W) \\
& \geq \inf_{W} (1+\mu)I(X,Y;W)-\mu I(X;W) -\mu I(Y;W) +\mu I(X;Y) -\mu \gamma \label{eqn:alllambda}  \\
&= h(X,Y) -\mu \gamma +  \mu \inf_{W}h(X|W)+h(Y|W) -(1+\frac{1}{\mu})h(X,Y|W) \\
&\geq h(X,Y) -\mu \gamma + \mu \min_{K^{\prime}: 0 \preceq K^{\prime} \preceq \begin{pmatrix} 1 & \rho \\ \rho &1  \end{pmatrix}} h(X^{\prime})+h(Y^{\prime})-(1+\frac{1}{\mu})h(X^{\prime},Y^{\prime}) \label{eqn:thm2sim} \\
& \geq \frac{1}{2} \log{(2\pi e)^2 (1-\rho^2)} -\mu \gamma + \frac{\mu}{2} \log{\frac{\mu^2}{\mu^2-1}}-\frac{1}{2} \log{(2\pi e)^2\frac{(1-\rho)^2(\mu+1)}{\mu-1}}  \label{eqn:lastexp} \\
& \geq \log^+{ \left( \frac{1+|\rho|}{1-|\rho|} \cdot \frac{1-\sqrt{1-e^{-2\gamma}}}{1+\sqrt{1-e^{-2\gamma}}} \right) } \label{eqn:lastexp2}
\end{align}
where (\ref{eqn:alllambda}), is a bound for all $\mu \geq 0$; (\ref{eqn:thm2sim}) follows from Theorem \ref{Thm:Hypercontract} where $(X^{\prime},Y^{\prime}) \sim \mathcal{N}(0,K^{\prime})$, $\mu:=\frac{1}{\lambda}$ and for the assumption $0 < \lambda <1$ to be satisfied we need $\mu > 1$; (\ref{eqn:lastexp}) follows from Lemma \ref{lem:lemmasymmery} for $\mu \geq \frac{1}{\rho}$ and (\ref{eqn:lastexp2}) follows by maximizing the function 
\begin{align}
g(\mu):=\frac{1}{2} \log{(2\pi e)^2 (1-\rho^2)} -\mu \gamma + \frac{\mu}{2} \log{\frac{\mu^2}{\mu^2-1}}-\frac{1}{2} \log{(2\pi e)^2\frac{(1-\rho)^2(\mu+1)}{\mu-1}},
\end{align} 
for $\mu \geq \frac{1}{\rho}$. Now we need to choose the tightest bound where $\mu \geq \frac{1}{\rho}$, which is $\max_{\mu \geq \frac{1}{\rho}}  g(\mu)$ and function $g$ is concave in $\mu$,
\begin{align}
\frac{\partial^2 g}{\partial \mu^2}&=-\frac{1}{\mu(\mu^2-1)} < 0.
\end{align}
By studying the monotonicity we obtain
\begin{align}
\frac{\partial g}{\partial \mu}&=-\frac{1}{2}\log{\frac{\mu^2-1}{\mu^2}}  -\gamma, 
\end{align}
since the function is concave the maximum has to be when the derivative vanishes which leads to the optimal solution $\mu_*=\frac{1}{\sqrt{1-e^{-2\gamma}}}$, where $\mu_* \geq \frac{1}{\rho}$. Substituting for the optimal $\mu_*$ we obtain 
\begin{align}
C_{\gamma}(X;Y) \geq g \left( \frac{1}{\sqrt{1-e^{-2\gamma}}} \right)=\frac{1}{2}\log^+{ \left( \frac{1+\rho}{1-\rho} \cdot  \frac{1-\sqrt{1-e^{-2\gamma}}}{1+\sqrt{1-e^{-2\gamma}}} \right)}.
\end{align}
\end{proof}

Now let us move the attention to the upper bound. Let us assume (without loss of generality) that $X$ and $Y$ have unit variance and are non-negatively correlated
with correlation coefficient $\rho \ge 0.$
Since they are jointly Gaussian, we can express them as
\begin{align}
X&=\sigma W+\sqrt{1-\sigma^2}N_X \\
Y&=\sigma W+\sqrt{1-\sigma^2}N_Y, 
\end{align}
where $W, N_X, N_Y$ are jointly Gaussian, and where $W\sim \mathcal{N}(0,1)$ is independent of $(N_X, N_Y).$
Letting the covariance of the vector $(N_X, N_Y)$ be
\begin{align}
K_{(N_X,N_Y)}=\begin{pmatrix} 
1 & \alpha  \\
\alpha & 1 \\
\end{pmatrix}
\end{align}
for some $0 \le \alpha \le \rho,$
we find that we need to choose $\sigma^2=\frac{\rho-\alpha}{1-\alpha}.$
Specifically, let us select $\alpha=\sqrt{1-e^{-2\gamma}},$ for some $0\le \gamma \le \frac{1}{2}\log\frac{1}{1-\rho^2}.$
For this choice, we find $I(X;Y|W)=\gamma$ and 
\begin{align}
I(X,Y;W)= \frac{1}{2} \log \frac{(1+\rho)(1-\alpha)}{(1-\rho)(1+\alpha)}.
\end{align}

\section{The Vector Gaussian Case}

In this section, we consider the case where $\ve X$ and $\ve Y$ are jointly Gaussian random vectors.
The key observation is that in this case, there exist invertible matrices $A$ and $B$ such that $A\ve X$ and $B\ve Y$ are vectors of independent pairs,
exactly like in Lemma~\ref{thm:gensplit}. Therefore, we can use that theorem to give an explicit formula for the relaxed Wyner's common information between {\it arbitrarily correlated}  jointly Gaussian random vectors, as stated in the following theorem.

\begin{theorem} \label{thm-Gauss-vector}
Let $\ve X$ and $\ve Y$ be jointly Gaussian random vectors of length $n$ and covariance matrix $K_{(\ve X, \ve Y)}$.
Then,
\begin{align}
&C_{\gamma}({\ve X};{\ve Y})= \min_{\gamma_i: \sum_{i=1}^n \gamma_i = \gamma} \sum_{i=1}^n C_{\gamma_i}(X_i;Y_i),
\end{align}
where 
\begin{align}
C_{\gamma_i}(X_i;Y_i)=\frac{1}{2} \log^+ \frac{(1+\rho_i)(1-\sqrt{1- e^{-2 \gamma_i}})}{(1-\rho_i)(1+\sqrt{1- e^{-2 \gamma_i}})}
\end{align}
and $\rho_i$ (for $i=1,\dots,n$) are the singular values of $K_{\ve X}^{-1/2} K_{\ve X \ve Y} K_{\ve Y}^{-1/2},$
where $K_{\ve X}^{-1/2}$ and $K_{\ve Y}^{-1/2}$ are defined to mean that only the positive eigenvalues are inverted.
\end{theorem}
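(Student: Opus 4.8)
The plan is to reduce the vector problem to $n$ independent scalar problems via a simultaneous diagonalization, and then invoke the tensorization lemma (Lemma~\ref{thm:gensplit}) together with the scalar formula (Theorem~\ref{thm:scalar}). First I would recall that $C_\gamma$ is invariant under invertible linear transformations applied separately to $\ve X$ and $\ve Y$: if $A$ and $B$ are invertible, then $(A\ve X, B\ve Y)$ and $(\ve X,\ve Y)$ determine each other bijectively, so any $W$ feasible for one is feasible for the other and the objective $I(\ve X,\ve Y;W)$ and constraint $I(\ve X;\ve Y|W)$ are unchanged. (If $K_{\ve X}$ or $K_{\ve Y}$ is singular one first projects onto the supports, which is why the theorem statement inverts only positive eigenvalues; I would handle this degenerate case by a brief remark that the zero-variance directions carry no information and can be dropped.) Hence we may replace $(\ve X,\ve Y)$ by $(K_{\ve X}^{-1/2}\ve X, K_{\ve Y}^{-1/2}\ve Y)$, which are each white (identity covariance) and have cross-covariance $M := K_{\ve X}^{-1/2} K_{\ve X \ve Y} K_{\ve Y}^{-1/2}$.

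Next I would apply the singular value decomposition $M = U \Sigma V^T$ with $U,V$ orthogonal and $\Sigma = \diag(\rho_1,\dots,\rho_n)$. Setting $\ve X' = U^T K_{\ve X}^{-1/2}\ve X$ and $\ve Y' = V^T K_{\ve Y}^{-1/2}\ve Y$ (both transformations invertible, orthogonal times invertible), the pair $(\ve X',\ve Y')$ is jointly Gaussian, zero-mean, with $K_{\ve X'} = K_{\ve Y'} = I_n$ and $K_{\ve X'\ve Y'} = \Sigma$ diagonal. Therefore the stacked covariance of $(X_i', Y_i')$ is $\begin{pmatrix} 1 & \rho_i \\ \rho_i & 1\end{pmatrix}$ and, crucially, the $n$ pairs $\{(X_i',Y_i')\}_{i=1}^n$ are mutually independent because their joint covariance matrix is block-diagonal after reordering coordinates and Gaussianity turns uncorrelated into independent. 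By the linear-invariance step, $C_\gamma(\ve X;\ve Y) = C_\gamma(\ve X';\ve Y')$.

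Now Lemma~\ref{thm:gensplit} applies directly to $(\ve X',\ve Y') = (X'^n, Y'^n)$, giving
\begin{align}
C_\gamma(\ve X';\ve Y') &= \min_{\{\gamma_i\}: \sum_i \gamma_i = \gamma} \sum_{i=1}^n C_{\gamma_i}(X_i';Y_i'),
\end{align}
and Theorem~\ref{thm:scalar} evaluates each scalar term as $C_{\gamma_i}(X_i';Y_i') = \frac{1}{2}\log^+\frac{(1+|\rho_i|)(1-\sqrt{1-e^{-2\gamma_i}})}{(1-|\rho_i|)(1+\sqrt{1-e^{-2\gamma_i}})}$; since the $\rho_i$ are singular values they are nonnegative, so $|\rho_i| = \rho_i$. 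Combining the last two displays with the invariance identity yields the claimed formula.

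I do not expect a serious obstacle here — the theorem is essentially a corollary of Lemma~\ref{thm:gensplit} once the diagonalization is set up. The one point requiring care is the linear-invariance claim for $C_\gamma$: I would state and verify it as a short standalone fact, being explicit that it holds because invertible maps applied coordinatewise to $\ve X$ and $\ve Y$ separately preserve both the Markov/feasibility structure and all the mutual information quantities involved. The second mild subtlety is the singular-covariance case, which I would dispatch with the remark above about restricting to the column spaces of $K_{\ve X}$ and $K_{\ve Y}$, consistent with the theorem's convention on $K_{\ve X}^{-1/2}$ and $K_{\ve Y}^{-1/2}$.
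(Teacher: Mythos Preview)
Your proposal is correct and follows essentially the same approach as the paper: whiten each vector via $K_{\ve X}^{-1/2}$ and $K_{\ve Y}^{-1/2}$, apply the SVD of the resulting cross-covariance to obtain independent Gaussian pairs with correlations $\rho_i$, invoke the invariance of $C_\gamma$ under invertible coordinatewise linear maps, then apply Lemma~\ref{thm:gensplit} and Theorem~\ref{thm:scalar}. The paper handles the singular-covariance case in the same spirit you suggest, by inverting only the positive eigenvalues.
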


\begin{remark}
Note that we do not assume that $K_{\ve X}$ and $K_{\ve Y}$ are of full rank.
Moreover, note that the case where $\ve X$ and $\ve Y$ are of unequal length is included:
Simply invoke Lemma~\ref{Lemma-relWyner-basicprops}, Item 4), to append the shorter vector with independent Gaussians
so as to end up with two vectors of the same length.
\end{remark}

\begin{proof}
Note that the mean is irrelevant for the problem at hand, so we assume it to be zero without loss of generality.
The first step of the proof is to apply the same transform used, e.g., in~\cite{Satpathy--Cuff}.
Namely, we form $\hat{\ve{X}} = K_{\ve X}^{-1/2} {\ve{X}}$ and $\hat{\ve{Y}} = K_{\ve Y}^{-1/2} {\ve{Y}},$
where $K_{\ve X}^{-1/2}$ and $K_{\ve Y}^{-1/2}$ are defined to mean that only the positive eigenvalues are inverted.
Let us denote the rank of $K_{\ve X}$ by $r_X$ and the rank of $K_{\ve Y}$ by $r_Y.$
Then, we have
\begin{align}
K_{\hat{\ve X}}&= \left( \begin{array}{cc} I_{r_X} & 0 \\ 0 & 0_{n-r_X} \end{array} \right)
\end{align}
and
\begin{align}
K_{\hat{\ve Y}}&= \left( \begin{array}{cc} I_{r_Y} & 0 \\ 0 & 0_{n-r_Y} \end{array} \right)
\end{align}
Moreover, we have $K_{\hat{\ve X}\hat{\ve Y}}= K_{\ve X}^{-1/2} K_{\ve X \ve Y} K_{\ve Y}^{-1/2}$. Let us denote the singular value decomposition of this matrix by $K_{\hat{\ve X}\hat{\ve Y}}=R_{\ve X} \Lambda R_{\ve Y}$. Define $\tilde{\ve X} = R_{\ve X}^T \hat{\ve X}$ and $\tilde{\ve Y} = R_{\ve Y} \hat{\ve Y}$, which implies that $K_{\tilde{\ve X}}=K_{\hat{\ve X}},$ $K_{\tilde{\ve Y}}=K_{\hat{\ve Y}},$ and $K_{\tilde{\ve X}\tilde{\ve Y}}=\Lambda$.
The second step of the proof is to observe that the mappings from $\ve X$ to $\tilde{\ve X}$ and from $\ve Y$ to $\tilde{\ve Y},$ respectively, are linear one-to-one and mutual information is preserved under such transformation.  
Hence, we have $C_{\gamma}({\ve X};{\ve Y})=C_{\gamma}(\tilde{\ve X};\tilde{\ve Y}).$
The third, and key, step of the proof is now to observe that $\{ (X_i,Y_i) \}_{i=1}^n$ are $n$ independent pairs of random variables.
Hence, we can apply Lemma~\ref{thm:gensplit}.
The final step is to apply Theorem~\ref{thm:scalar} separately to each of the independent pairs, thus establishing the claimed formula.
\end{proof}

In the remainder of this section, we explore the structure of the allocation problem in Theorem~\ref{thm-Gauss-vector}, that is, the problem of optimally choosing the values of $\gamma_i.$
As we will show, the answer is of the water-filling type. That is, there is a ``water level'' $\gamma^*.$ Then, all $\gamma_i$ whose corresponding correlation coefficient $\rho_i$ is large enough will be set equal to $\gamma^*.$ The remaining $\gamma_i,$ corresponding to those $i$ with low correlation coefficient $\rho_i,$ will be set to their respective maximal values (all of which are smaller than $\gamma^*$).
To establish this result, we prefer to change notation as follows. We define $\alpha_i=\sqrt{1-e^{-2\gamma_i}}.$
With this, we can express the allocation problem in Theorem~\ref{thm-Gauss-vector} as
\begin{align}
C_{\gamma}(\ve X ; \ve Y)&= \min_{\alpha_1, \alpha_2, \cdots, \alpha_n}  \sum_{i=1}^n \frac{1}{2} \log^+ \frac{(1+\rho_i)(1-\alpha_i)}{(1-\rho_i)(1+\alpha_i)}  \mbox{ such that }   \sum_{i=1}^n \frac{1}{2} \log \frac{1}{1-\alpha_i^2} \le \gamma. \label{RevWF-Eqbasic}
\end{align}
Moreover, defining
\begin{align}
C(\rho) = \frac{1}{2} \log \frac{1+\rho}{1-\rho}, \quad I(\rho) = \frac{1}{2} \log \frac{1}{1-\rho^2},
\end{align}
we can rewrite Equation~\eqref{RevWF-Eqbasic} as
\begin{align} 
C_\gamma(\ve X ; \ve Y)&= \min_{\alpha_1, \alpha_2, \cdots, \alpha_n}  \sum_{i=1}^n \left( C(\rho_i)-C(\alpha_i) \right)^+  \mbox{ such that }   \sum_{i=1}^n I({\alpha_i}) \le \gamma. \label{RevWF-Eqbasic2}
\end{align}

\begin{theorem}\label{thm-waterfilling}
The solution to the allocation problem of Theorem~\ref{thm-Gauss-vector} can be expressed as
\begin{align}
C_\gamma(\ve X ; \ve Y)&= \sum_{i=1}^n \left( C({\rho_i})- \beta^*\right)^+ ,
\end{align}
where $\beta^*$ is selected such that
\begin{align}
 \sum_{i=1}^n \min \left\{ f(\beta^*), I({\rho_i}) \right\} &= \gamma,
\end{align}
where
\begin{align}
f(\beta^*) &= \frac{1}{2} \log \frac{(\exp(2\beta^*)+1)^2}{4\exp(2\beta^*)}.
\end{align}
\end{theorem}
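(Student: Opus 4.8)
The plan is to attack the deterministic allocation problem in~\eqref{RevWF-Eqbasic2} directly: one change of variables turns it into the maximization of a linear functional over a compact convex set, and then the KKT conditions deliver exactly the reverse water-filling description in the statement.

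First I would strip away the degenerate range. If $\gamma\ge I({\ve X};{\ve Y})=\sum_{i=1}^n I(\rho_i)$, then a $W$ independent of $({\ve X},{\ve Y})$ is feasible and gives $C_\gamma({\ve X};{\ve Y})=0$, so the claim holds trivially (with $\beta^*=\max_i C(\rho_i)$); assume henceforth $0\le\gamma<\sum_{i=1}^n I(\rho_i)$. Indices with $\rho_i=0$ contribute nothing and may be dropped. In~\eqref{RevWF-Eqbasic2} there is never any benefit in taking $\alpha_i>\rho_i$ (the objective term is already $0$ there, while a larger $\alpha_i$ only consumes more budget), so we may restrict to $0\le\alpha_i\le\rho_i$, on which interval the $(\cdot)^+$ is inactive. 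Setting $\beta_i=C(\alpha_i)$, a strictly increasing bijection of $[0,\rho_i]$ onto $[0,C(\rho_i)]$, the constraint term $I(\alpha_i)$ becomes $f(\beta_i)$ with $f:=I\circ C^{-1}$; a short computation gives $f(\beta)=\log(e^{2\beta}+1)-\log 2-\beta$ (the $f$ in the theorem) together with $f(C(\rho_i))=I(\rho_i)$. Hence~\eqref{RevWF-Eqbasic2} is the same as
\[
C_\gamma({\ve X};{\ve Y})=\sum_{i=1}^n C(\rho_i)-\max\{\,\textstyle\sum_{i=1}^n\beta_i\ :\ 0\le\beta_i\le C(\rho_i),\ \sum_{i=1}^n f(\beta_i)\le\gamma\,\}.
\]

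Next I would record that $f'(\beta)=\tanh\beta$ and $f''(\beta)=1/\cosh^2\beta>0$, so $f$ is strictly convex and strictly increasing on $[0,\infty)$ with $f(0)=0$; the feasible set above is therefore compact and convex, and we are maximizing a linear functional over it. Slater's condition holds (take every $\beta_i$ equal to a small positive constant), so the KKT conditions are necessary and sufficient. With multiplier $\lambda\ge0$ on the budget constraint and $\mu_i,\eta_i\ge0$ on $\beta_i\ge0$ and $\beta_i\le C(\rho_i)$, stationarity reads $\lambda\tanh\beta_i=1+\mu_i-\eta_i$. At $\beta_i=0$ (so $\eta_i=0$) this would force $\mu_i=-1$, which is impossible, so the lower bound is never active; at an interior $\beta_i\in(0,C(\rho_i))$ we get $\tanh\beta_i=1/\lambda$, so $\beta_i$ equals a single common value $\beta^*:=C(1/\lambda)$ (using $C=\tanh^{-1}$); at a capped $\beta_i=C(\rho_i)$ we get $\eta_i=1-\lambda\rho_i\ge0$, i.e. $C(\rho_i)\le\beta^*$. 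Therefore $\beta_i=\min\{\beta^*,C(\rho_i)\}$ for every $i$. Since $\gamma<\sum_i I(\rho_i)$, not all coordinates can be capped (that would make $\sum_i f(\beta_i)=\sum_i I(\rho_i)>\gamma$, violating feasibility), so at least one coordinate is interior; then $\lambda>1>0$, the budget is tight, and
\[
\gamma=\sum_{i=1}^n f\big(\min\{\beta^*,C(\rho_i)\}\big)=\sum_{i=1}^n\min\{f(\beta^*),f(C(\rho_i))\}=\sum_{i=1}^n\min\{f(\beta^*),I(\rho_i)\},
\]
using monotonicity of $f$ — precisely the defining equation for $\beta^*$. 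Such a $\beta^*$ exists because $\beta\mapsto\sum_i\min\{f(\beta),I(\rho_i)\}$ is continuous and nondecreasing, equals $0$ at $\beta=0$, and tends to $\sum_i I(\rho_i)>\gamma$. Plugging back, the maximum equals $\sum_i\min\{\beta^*,C(\rho_i)\}$, whence $C_\gamma({\ve X};{\ve Y})=\sum_i(C(\rho_i)-\min\{\beta^*,C(\rho_i)\})=\sum_i(C(\rho_i)-\beta^*)^+$, as claimed. One may also bypass the KKT machinery by a direct exchange argument: at any optimum the budget is tight, strict convexity of $f$ forces all coordinates not pinned at their cap to share one $\beta$-value $\beta^*$, and any capped coordinate must satisfy $C(\rho_i)\le\beta^*$ — the identical conclusion.

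The step that needs care is the change of variables: it must be $\beta_i=C(\alpha_i)$ rather than the $\alpha_i$ themselves, because in the original variables the objective terms $C(\rho_i)-C(\alpha_i)$ are concave (and $I(\alpha_i)$ is convex), so the problem is not in convex form and the water-filling shape is not visible; and the box constraints $\beta_i\le C(\rho_i)$ must be carried throughout, since it is exactly these caps that convert the naive ``equalize every $\beta_i$'' answer into the $\min\{f(\beta^*),I(\rho_i)\}$ of the statement. The rest — the identities $f=I\circ C^{-1}$, $f'=\tanh$, $f(C(\rho_i))=I(\rho_i)$, and the edge cases ($\gamma\ge I({\ve X};{\ve Y})$, $\rho_i=0$, and the harmless nonuniqueness of $\beta^*$ right at $\gamma=I({\ve X};{\ve Y})$) — is routine.
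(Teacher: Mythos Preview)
Your argument is correct. It differs from the paper's proof in method, though both exploit the same underlying convexity. The paper stays in the variables $\gamma_i$ and works with the concave map $g=C\circ I^{-1}$; it orders the $\rho_i$ and runs a case-by-case analysis (``all terms active'', ``the $n$-th term drops out'', \ldots), invoking Jensen on $g$ in each case to show that the active $\gamma_i$ should all be equal and the inactive ones saturated at $I(\rho_i)$. Your change of variables $\beta_i=C(\alpha_i)$ recasts the problem as maximizing a \emph{linear} functional over the convex set $\{\sum_i f(\beta_i)\le\gamma,\ 0\le\beta_i\le C(\rho_i)\}$, and then a single KKT (or exchange) argument delivers $\beta_i=\min\{\beta^*,C(\rho_i)\}$ all at once. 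What your approach buys is a unified treatment with no case split and a clean identification of $\beta^*$ via the multiplier $\lambda$ (namely $\beta^*=C(1/\lambda)$); what the paper's approach buys is that it is entirely elementary, needing only Jensen and no optimality conditions. Your observation that the change to $\beta$-variables (rather than $\alpha$- or $\gamma$-variables) is what puts the problem in convex form is a point the paper leaves implicit.
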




\begin{proof}[Proof of Theorem~\ref{thm-waterfilling}]
Note that $(\ref{RevWF-Eqbasic2})$ can be rewritten as
\begin{align}
C_{\gamma}(\ve X ; \ve Y)&= \min_{\gamma_1, \gamma_2, \cdots, \gamma_n}  \sum_{i=1}^n \left( C({\rho_i})-C(I^{-1}(\gamma_i)) \right)^+  \mbox{ such that }   \sum_{i=1}^n \gamma_{i} \le \gamma, \label{eq-thm-waterfilling}
\end{align}
and thus, for notational compactness, let us define
\begin{align}
g(x) &= C(I^{-1}(x)) = \frac{1}{2}\log{\frac{1+\sqrt{1-e^{-2x}}}{1-\sqrt{1-e^{-2x}}}},
\end{align}
which is a strictly concave, strictly increasing function.
We also define its inverse,
\begin{align}
f(x)&= g^{-1}(x) = I (C^{-1}(x)) = \frac{1}{2} \log \frac{1}{1-\left(\frac{\exp(2x)-1}{\exp(2x)+1)}\right)^2} = \frac{1}{2} \log \frac{(\exp(2x)+1)^2}{4\exp(2x)},
\end{align}
which is a strictly convex, strictly increasing function.

Without loss of generality, suppose that $\rho_1 \ge \rho_2 \ge \cdots \ge \rho_n.$ The objective function is composed of $n$ terms which can be active or not, meaning that they can be either positive or zero. Since the function $C(\rho)$ is increasing in $\rho,$ we have that $C(\rho_1) \geq C(\rho_2) \geq \cdots \geq C(\rho_n)$.
To summarize the intuition of the proof, note that the $n$-th term, {\it i.e.,} $\left( C({\rho_n})-g(\gamma_n) \right)^+,$ will be inactive first. Therefore, by increasing $\gamma$ then the terms will become inactive in a decreasing fashion until we are left with only the first term active and the rest inactive. 

Let us start with the case when they are all active, which means that $\sum_{i=1}^n \left( C({\rho_i})-g(\gamma_i) \right)^+ = \sum_{i=1}^n \left( C({\rho_i})-g(\gamma_i) \right)$
Then, by the concavity of $g(\gamma_i),$ we have
\begin{align}
 \sum_{i=1}^n g(\gamma_i) &\le  n g(\frac{\gamma}{n}),
\end{align}
thus an optimal choice is $\gamma^* = \frac{\gamma}{n},$ for all $i.$ Hence, in our notation, in this case $\beta^* = g(\frac{\gamma}{n}).$ Clearly, all the terms are active in the interval $0\le \gamma\le n I({\rho_n}),$ with the reasoning that if the $n$-th terms is active then the rest of the terms is active too. Next, consider the case when the $n$-th term is inactive and the rest is active. Therefore, $\sum_{i=1}^n \left( C({\rho_i})-g(\gamma_i) \right)^+ = \sum_{i=1}^{n-1} \left( C({\rho_i})-g(\gamma_i) \right)$ and by the concavity of $g(\gamma_i),$ we have
\begin{align}
 \sum_{i=1}^{n-1} g(\gamma_i) &\le  (n-1) g \left(\frac{\gamma}{n-1} \right),
\end{align}
thus an optimal choice is $\gamma^* = \frac{\gamma-\gamma_n}{n-1},$ for all $i \in \{1,2,\cdots ,n-1\}$. The optimal choice for $\gamma_n$ is $\gamma_n=I(\rho_n)$, which makes the $n$-th term exactly zero. This scenario will happen in the interval, $n I({\rho_n}) < \gamma \le  I({\rho_n}) + (n-1) I({\rho_{n-1}}).$
Instead, the corresponding $\beta^*$ in our notation is $\beta^* = g \left(\frac{\gamma-I({\rho_n})}{n-1} \right).$ In general, let us consider the case when $k$-th term is active and $k+1$-th is inactive. By a similar argument as above, the optimal choice is $\gamma^*=\frac{\gamma - \sum_{i=k+1}^n \gamma_i}{n-k}$ for $i \in \{ 1,2,\cdots,k\}$ and $\gamma_i=I(\rho_i)$ for $i \in \{ k+1, \cdots,n \}$. This scenario will happen in the interval $(k+1)I(\rho_{k+1}) +\sum_{i=k+2}^n I(\rho_i)< \gamma \leq kI(\rho_k) +\sum_{i=k+1}^n I(\rho_i)$. Importantly, observe that the optimal $\gamma_i$ can be rewritten as $\gamma_i=\min \{ I(\rho_i), \gamma^* \}$, therefore the solution to the allocation problem can be expressed as 
\begin{align}
C_\gamma(\ve X ; \ve Y)&= \sum_{i=1}^n \left( C({\rho_i})- g(\gamma^*)\right)^+ ,
\end{align}
where $\gamma^*$ is selected such that
\begin{align}
 \sum_{i=1}^n \min \left\{ \gamma^*, I({\rho_i}) \right\} &= \gamma.
\end{align}
The solution to the allocation problem can be rewritten as \begin{align}
C_\gamma(\ve X ; \ve Y)&= \sum_{i=1}^n \left( C({\rho_i})- \beta^*\right)^+ ,
\end{align}
where $\beta^*$ is selected such that
\begin{align}
 \sum_{i=1}^n \min \left\{ f(\beta^*), I({\rho_i}) \right\} &= \gamma.
\end{align}

\end{proof}

Theorem~\ref{thm-waterfilling} shows that the allocation problem has a natural reverse water-filling interpretation which can be visualized in two dual ways.
First, we could consider the space of the $\gamma_i$ parameters, which leads to Figure~\ref{fig:bar}: None of the $\gamma_i$ should be selected larger than the corresponding $I(\rho_i),$ and those $\gamma_i$ that are strictly smaller than their maximum value should all be equal. This graphically identifies the optimal value $\gamma^*,$ and thus, the resulting solution to our optimization problem.
Alternatively, we could consider directly the space of the individual contributions to the objective, denoted by $C(\rho_i)$ in Equation~\eqref{eq-thm-waterfilling}, which leads to Figure~\ref{fig:bar2}.
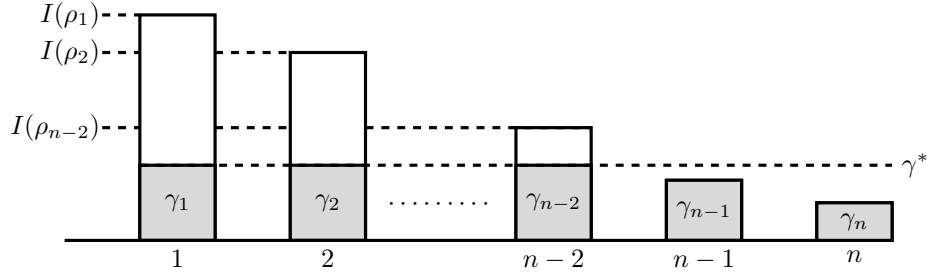
\begin{figure}[ht]%
\centering
\begin{tikzpicture}
\draw[black,very thick] (1,0) rectangle (2,3);
\draw[black,very thick] (3,0) rectangle (4,2.5);
\draw[black,very thick] (6,0) rectangle (7,1.5);
\draw[black,very thick] (8,0) rectangle (9,0.8);
\draw[black,very thick] (10,0) rectangle (11,0.5);
\draw[black,thick,anchor=north] (1.5,0) node{$1$};
\draw[black,thick,anchor=north] (3.5,0) node{$2$};
\draw[black,thick,anchor=north] (6.5,0) node{$n-2$};
\draw[black,thick,anchor=north] (8.5,0) node{$n-1$};
\draw[black,thick,anchor=north] (10.5,0) node{$n$};
\draw[black,thick,anchor=east] (0.6,1.5) node{$I(\rho_{n-2})$};
\draw[black,thick,anchor=east] (0.6,2.5) node{$I(\rho_2)$};
\draw[black,thick,anchor=east] (0.6,3) node{$I(\rho_1)$};
\draw[black,thick,anchor=west] (11,1) node{$\gamma^*$};
\filldraw[color=black, fill=black!15, very thick](1,0) rectangle (2,1);
\filldraw[color=black, fill=black!15, very thick](3,0) rectangle (4,1);
\filldraw[color=black, fill=black!15, very thick](6,0) rectangle (7,1);
\filldraw[color=black, fill=black!15, very thick](8,0) rectangle (9,0.8);
\filldraw[color=black, fill=black!15, very thick](10,0) rectangle (11,0.5);
\draw[black,thick] (1.5,0.5) node{$\gamma_1$};
\draw[black,thick] (3.5,0.5) node{$\gamma_2$};
\draw[black,thick] (6.5,0.5) node{$\gamma_{n-2}$};
\draw[black,thick] (8.5,0.4) node{$\gamma_{n-1}$};
\draw[black,thick] (10.5,0.25) node{$\gamma_{n}$};
\draw[dashed,black,very thick](0.5,3)--(1,3);
\draw[dashed,black,very thick](0.5,2.5)--(1,2.5);
\draw[dashed,black,very thick](2,2.5)--(3,2.5);
\draw[dashed,black,very thick](0.5,1.5)--(1,1.5);
\draw[dashed,black,very thick](2,1.5)--(3,1.5);
\draw[dashed,black,very thick](4,1.5)--(7,1.5);
\draw[-,black,very thick](0,0)--(10,0);
\draw[dashed,black,very thick](1,1)--(11,1);
\draw[black,very thick,anchor=north] (5,0.7) node{$\cdots  \cdots \cdots$};
\end{tikzpicture}
\caption{Example of reverse water-filling. The (whole) bars represent the $\gamma_i$-s which make $C_{\gamma_i}(X_i;Y_i)=0$, and the shaded area of the bars is the proper allocation $\gamma_i$ to minimize the original problem. In this example, $\gamma=\sum_{i=1}^n \gamma_i$ is chosen such that $C_{\gamma_{n-1}}(X_{n-1};Y_{n-1})=C_{\gamma_{n}}(X_{n};Y_{n})=0$.} 
\label{fig:bar}
\end{figure}

\begin{figure}[ht]%
\centering
\begin{tikzpicture}
\draw[black,very thick] (1,0) rectangle (2,3);
\draw[black,very thick] (3,0) rectangle (4,2.5);
\draw[black,very thick] (6,0) rectangle (7,1.5);
\draw[black,very thick] (8,0) rectangle (9,0.8);
\draw[black,very thick] (10,0) rectangle (11,0.5);
\draw[black,thick,anchor=north] (1.5,0) node{$1$};
\draw[black,thick,anchor=north] (3.5,0) node{$2$};
\draw[black,thick,anchor=north] (6.5,0) node{$n-2$};
\draw[black,thick,anchor=north] (8.5,0) node{$n-1$};
\draw[black,thick,anchor=north] (10.5,0) node{$n$};
\draw[black,thick,anchor=east] (0.6,1.5) node{$C(\rho_{n-2})$};
\draw[black,thick,anchor=east] (0.6,2.5) node{$C(\rho_2)$};
\draw[black,thick,anchor=east] (0.6,3) node{$C(\rho_1)$};
\draw[black,thick,anchor=west] (11,1) node{$\beta^*$};
\filldraw[color=black, fill=black!15, very thick](1,3) rectangle (2,1);
\filldraw[color=black, fill=black!15, very thick](3,2.5) rectangle (4,1);
\filldraw[color=black, fill=black!15, very thick](6,1.5) rectangle (7,1);
\draw[black,thick] (3.5,3) node{$C(\rho_1)-\beta^*$};
\draw[black,thick] (5.5,2) node{$C(\rho_2)-\beta^*$};
\draw[black,thick] (8.5,1.5) node{$C(\rho_{n-2})-\beta^*$};
\draw[->,thick, bend angle=45, bend left] (1.5,2) to (2.5,3);
\draw[->,thick, bend angle=45, bend left] (3.5,1.5) to (4.5,2);
\draw[->,thick, bend angle=45, bend left] (6.5,1.2) to (7.3,1.5);
\draw[dashed,black,very thick](0.5,3)--(1,3);
\draw[dashed,black,very thick](0.5,2.5)--(1,2.5);
\draw[dashed,black,very thick](2,2.5)--(3,2.5);
\draw[dashed,black,very thick](0.5,1.5)--(1,1.5);
\draw[dashed,black,very thick](2,1.5)--(3,1.5);
\draw[dashed,black,very thick](4,1.5)--(7,1.5);
\draw[-,black,very thick](0,0)--(10,0);
\draw[dashed,black,very thick](1,1)--(11,1);
\draw[black,very thick,anchor=north] (5,0.7) node{$\cdots  \cdots \cdots$};
\end{tikzpicture}
\caption{Example of reverse water-filling. The (whole) bars represent the (standard) Wyner's common information of each individual pair, respectively. The shaded area of the bars is the respective contribution to $C_\gamma(\ve X ; \ve Y).$ In this example, $\gamma$ is chosen such that $(C(\rho_{n-1}) -\beta^*)^+=(C(\rho_{n}) -\beta^*)^+=0$.} 
\label{fig:bar2}
\end{figure}
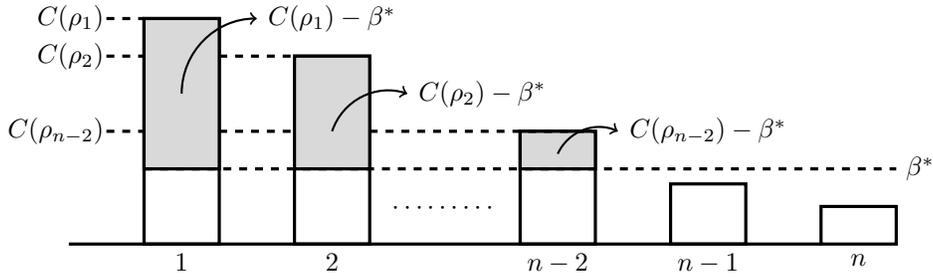

\section{The Gaussian Gray-Wyner Network}\label{Sec-GaussianGrayWyner}
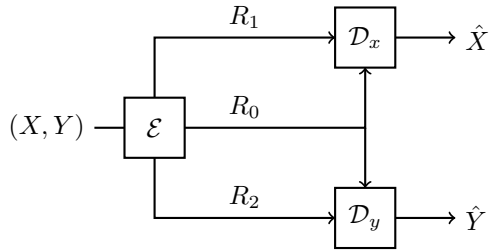
\begin{figure}[!ht]%
\centering
\begin{tikzpicture}[scale=0.8]
\draw[black,thick] (0,1) rectangle (1,2);
\draw[black,thick,anchor=east] (-0.5,1.5) node{$(X,Y)$};
\draw[black,thick,anchor=west] (5.5,3) node{$\hat{X}$};
\draw[black,thick,anchor=west] (5.5,0) node{$\hat{Y}$};
\draw[black,thick,anchor=center] (0.5,1.5) node{$\mathcal{E}$};
\draw[black,thick,anchor=south] (2,0) node{$R_2$};
\draw[black,thick,anchor=south] (2,1.5) node{$R_0$};
\draw[black,thick,anchor=south] (2,3) node{$R_1$};
\draw[black,thick](-0.5,1.5)--(0,1.5);
\draw[->,black,thick](0.5,1)--(0.5,0)--(3.5,0);
\draw[->,black,thick](1,1.5)--(4,1.5)--(4,2.5);
\draw[->,black,thick](4,1.5)--(4,0.5);
\draw[->,black,thick](0.5,2)--(0.5,3)--(3.5,3);
\draw[black,thick,anchor=center] (4,0) node{$\mathcal{D}_y$};
\draw[black,thick,anchor=center] (4,3) node{$\mathcal{D}_x$};
\draw[black,thick] (3.5,-0.5) rectangle (4.5,0.5);
\draw[black,thick] (3.5,2.5) rectangle (4.5,3.5);
\draw[->,black,thick](4.5,0)--(5.5,0);
\draw[->,black,thick](4.5,3)--(5.5,3);
\end{tikzpicture}
\caption{The Gray-Wyner Network} 
\label{fig:Gray-Wyner}
\end{figure}

The Gray-Wyner network~\cite{Gray--Wyner} is composed of one sender and two receivers, as illustrated in Figure~\ref{fig:Gray-Wyner}. In a nutshell, the sender compresses two underlying correlated sources $X$ and $Y$ (with fixed $p(x,y)$) into three descriptions.
Here, we follow the notation and formal problem statement as given in~\cite[Section II]{Gray--Wyner}.
The central description, of rate $R_0,$ is provided to both receivers. Additionally, each receiver also has access to a tailored private description at rates $R_1$ and $R_2,$ respectively.
At the receivers, reconstruction is accomplished to within a fidelity criterion.
For given fidelity requirements $\Delta_1$ and $\Delta_2$ in the reconstruction of sources $X$ and $Y,$
respectively, we seek to characterize the set of achievable rate triples $(R_0, R_1, R_2),$
again following~\cite[Section II]{Gray--Wyner} to the letter.
The full solution, up to the optimization over an auxiliary, is characterized in~\cite[Theorem 8]{Gray--Wyner}, see~\cite[Equations (40a)-(40b)]{Gray--Wyner}.
Namely, define the regions
\begin{align}
{\cal R}^{(W)}(\Delta_1, \Delta_2) = \left\{ (R_0, R_1, R_2) : R_0 \ge I(X,Y; W), R_1 \ge R_{X|W}(\Delta_1), R_2 \ge R_{Y|W}(\Delta_2) \right\}. \label{eq-GW-general}
\end{align}
Here, $R_{X|W}(\cdot)$ and $R_{Y|W}(\cdot)$ denote the conditional rate-distortion functions of $X$ and $Y,$ respectively, given $W,$ see~\cite{Gray--1972}.
Then, the optimal region, denoted by ${\cal R}^*(\Delta_1, \Delta_2),$ is the (set) closure of the union of these regions over all choices of $W.$
The difficulty with this result is taking the union over all $W.$ 

For the jointly Gaussian source $(X,Y)$ subject to mean-squared error distortion, the complete solution remains unknown.
An account of this special case already appears in~\cite[Section 2.5(B)]{Gray--Wyner}.
Partial progress was made in \cite{Akyol--Rose--2014,Xu--Liu--Chen--2016} for the special case where $R_0+R_1+R_2 = R_{X,Y}(\Delta_1,\Delta_2).$
Here, $R_{X,Y}(\Delta_1,\Delta_2)$ denotes the rate-distortion function of {\it jointly} encoding $X$ and $Y$ to fidelities $\Delta_1$ and $\Delta_2,$ respectively.
By a simple cut-set argument, any scheme must satisfy $R_0+R_1+R_2 \ge R_{X,Y}(\Delta_1,\Delta_2),$
and thus, if a scheme attains this bound with equality, it is necessarily optimal.
It is immediate that if $R_0$ is large enough, then it is possible to meet with equality $R_0+R_1+R_2 = R_{X,Y}(\Delta_1,\Delta_2).$
However, to date, no progress has been reported for the general case where it is not possible to attain this cut-set bound with equality.

The main contribution of the present paper is a closed-form solution for the general case.
Specifically, our techniques allow us to establish that restricting the union over all $W$ to only jointly Gaussian auxiliaries is without loss of optimality.
To keep notation simple, we consider the following symmetric projection of the optimal rate region:
\begin{align}
\mathrm{R}_{\Delta, \alpha}(X, Y) &= \min R_0 \mbox{ such that }  (R_0, R_1, R_2) \in {\cal R}^*(\Delta, \Delta) \mbox{ and } R_1+R_2\le \alpha.
\end{align}
Using Equation~\eqref{eq-GW-general}, we can express $\mathrm{R}_{\Delta, \alpha}(X, Y)$ explicitly as the following optimization problem:
\begin{align}
\mathrm{R}_{\Delta, \alpha}(X, Y) &= \inf I(X,Y; W) \nonumber \\
 & \mbox{ such that }  I(X;\hat{X}|W) + I(Y;\hat{Y}|W) \le \alpha \mbox{ and } {\mathbb E}[(X-\hat{X})^2] \le \Delta \mbox{ and } {\mathbb E}[(Y-\hat{Y})^2] \le \Delta,
\end{align}
where the infimum is over all distributions $p(w, \hat{x}, \hat{y}|x,y).$

Then, we have the following theorem:

\begin{theorem}\label{thm-Gauss-lossymum}
Let $X$ and $Y$ be jointly Gaussian with mean zero, equal variance $\sigma^2,$ and with correlation coefficient $\rho.$
Let the distortion measure be mean-squared error. Then,
\begin{align}
\mathrm{R}_{\Delta, \alpha}(X, Y) = \left\{ \begin{array}{lr} \frac{1}{2} \log^{+}{\frac{1+\rho}{2\frac{\Delta}{\sigma^2}e^{\alpha}+\rho-1}}, &   \mbox{ if } \sigma^2(1-\rho) \le \Delta e^{\alpha} \le \sigma^2  \\
 \frac{1}{2} \log^{+}{\frac{1-\rho^2}{\frac{\Delta^2}{\sigma^4}e^{2\alpha}}} , &   \mbox{ if }  \Delta e^\alpha \le \sigma^2(1-\rho).
\end{array} \right. \nonumber
\end{align}
\end{theorem}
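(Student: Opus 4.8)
The plan is to reduce the Gray-Wyner optimization to a version of the relaxed Wyner common information problem that we have already solved, and then to turn the conditional rate-distortion constraints into a scalar optimization handled by Lagrange duality, exactly in the spirit of Section~\ref{sec:scalar}. First I would argue that one may restrict attention to jointly Gaussian $W$ (and, given $W$, to $\hat X,\hat Y$ that are conditional-mean reconstructions, hence jointly Gaussian with $(X,Y,W)$). For the converse this is the key step: I would take any feasible $p(w,\hat x,\hat y|x,y)$ and lower bound $I(X,Y;W)$ by the same convex-envelope-factorization / hypercontractivity machinery used in Theorem~\ref{Thm:Hypercontract} and Lemma~\ref{lem:lemmasymmery}. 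Concretely, write $R_{X|W}(\Delta)\ge \tfrac12\log^+\!\frac{\sigma^2_{X|W}}{\Delta}$ where $\sigma^2_{X|W}$ is the (worst-case over the distribution) conditional MMSE-type variance, so that the distortion constraints become constraints only on second-order quantities $h(X|W),h(Y|W),h(X,Y|W)$; then the problem is bounded below by an optimization over admissible Gaussian triples, identical in form to the chain of inequalities~\eqref{eqn:alllambda}--\eqref{eqn:lastexp2}.

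Next I would set up the resulting Gaussian optimization. Parameterizing as in the achievability part of Section~\ref{sec:scalar}, write $X=\sigma W+\sqrt{1-\sigma^2}N_X$, $Y=\sigma W+\sqrt{1-\sigma^2}N_Y$ with $K_{(N_X,N_Y)}$ having off-diagonal $\alpha\in[0,\rho]$ and $\sigma^2=\frac{\rho-\alpha}{1-\alpha}$; the objective is $I(X,Y;W)=\tfrac12\log\frac{(1+\rho)(1-\alpha)}{(1-\rho)(1+\alpha)}$. Given this $W$, the conditional rate-distortion functions of the (Gaussian) pair $(X,Y)$ given $W$ are the standard parallel-Gaussian expressions, and the constraint $R_{X|W}(\Delta)+R_{Y|W}(\Delta)\le\alpha_{\mathrm{budget}}$ (I will rename the budget to avoid clashing with the covariance parameter) becomes an explicit inequality in $\alpha$. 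Minimizing $I(X,Y;W)$ over the remaining freedom then reduces to a one-dimensional problem; I expect the Lagrangian / KKT analysis to split into exactly the two regimes in the statement according to whether the distortion level $\Delta e^{\alpha}$ lies above or below $\sigma^2(1-\rho)$ — the first regime being where the private rates are "binding asymmetrically" against the common description and the second where $R_0$ is governed by the joint rate-distortion slack $R_{X,Y}(\Delta,\Delta)$. The $\log^+$ appears because once the budget $\alpha$ is large enough, $R_0=0$ is feasible.

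For achievability I would simply exhibit the Gaussian $W$ (and MMSE reconstructions $\hat X,\hat Y$) attaining each branch, i.e., plug the optimizing $\alpha$ back in and verify feasibility of all three constraints, which is a routine computation using the parallel-Gaussian rate-distortion formula; this matches the lower bound from the converse and closes the theorem. I would also record that the two thresholds are consistent at $\Delta e^\alpha=\sigma^2(1-\rho)$, where the two expressions agree, and at $\Delta e^\alpha=\sigma^2$, where the first expression hits $0$, so that $\mathrm{R}_{\Delta,\alpha}$ is continuous.

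The main obstacle is the first step: justifying that Gaussian $W$ is optimal for the Gray-Wyner objective with the two conditional rate-distortion constraints. Unlike the pure $C_\gamma$ problem, here the constraints involve $I(X;\hat X|W)$ and $I(Y;\hat Y|W)$ with auxiliary reconstruction variables, so I would need to first lower-bound each conditional rate-distortion term by a function of $h(X|W)$ (and $h(Y|W)$) alone — using that among all conditional laws with a given conditional "spread", Gaussian minimizes the rate needed for a given distortion, i.e. a conditional Shannon lower bound — and only then apply the factorization-of-convex-envelope argument of Theorem~\ref{Thm:Hypercontract} to the now purely entropic functional $\mu_1 h(X|W)+\mu_2 h(Y|W)-h(X,Y|W)$ with suitable multipliers. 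Getting the multiplier bookkeeping right so that the resulting Gaussian bound is tight (rather than merely an inequality, as in the loose bound~\eqref{eqn:appmutualcontractivity}) is the delicate part, and is exactly where the freedom to choose a non-obvious lower bound on the right-hand side of~\eqref{Eq-Thm:Hypercontract}, as in Lemma~\ref{lem:lemmasymmery}, will be needed.
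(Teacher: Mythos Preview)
Your plan is correct and matches the paper's approach closely: Lagrange-dualize the sum-rate constraint, replace the conditional rate-distortion terms by the Shannon lower bound $I(X;\hat X|W)\ge h(X|W)-\tfrac12\log(2\pi e\Delta)$ so that only the entropic functional $\nu\bigl(h(X|W)+h(Y|W)\bigr)-h(X,Y|W)$ survives, and then invoke Theorem~\ref{Thm:Hypercontract} together with Lemma~\ref{lem:lemmasymmery} and optimize over the multiplier to obtain the two regimes. The only cosmetic difference is that the paper carries out the converse in a single Lagrangian chain rather than first isolating ``Gaussian $W$ is optimal'' as a standalone claim; also, by symmetry a single multiplier $\nu$ suffices rather than separate $\mu_1,\mu_2$.
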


\begin{proof}
First, we observe that for mean-squared error, the source variance is irrelevant: A scheme attaining distortion $\Delta$ for sources of variance $\sigma^2$
is a scheme attaining distortion $\Delta/\sigma^2$ on unit-variance sources, and vice versa.
Therefore, for ease of notation, in the sequel, we assume that the sources are of unit variance.
Then, we can bound:
\begin{align}
\mathrm{R}_{\Delta, \alpha}(X, Y)&= \inf_{ \substack{W,\hat{X},\hat{Y}:I(X;\hat{X}|W)+I(Y;\hat{Y}|W) \leq \alpha\\ {\mathbb E}[(X-\hat{X})^2] \leq \Delta \\ {\mathbb E}[(Y-\hat{Y})^2] \leq \Delta }} I(X,Y;W)  \\
& \geq \inf_{ \substack{W,\hat{X},\hat{Y}: {\mathbb E}[(X-\hat{X})^2] \leq \Delta \\ {\mathbb E}[(Y-\hat{Y})^2] \leq \Delta }} I(X,Y;W) +\nu(I(X;\hat{X}|W)+I(Y;\hat{Y}|W) - \alpha) \label{eqn:weakduality} \\
& = \inf_{ \substack{W,\hat{X},\hat{Y}: {\mathbb E}[(X-\hat{X})^2] \leq \Delta \\ {\mathbb E}[(Y-\hat{Y})^2] \leq \Delta }} h(X,Y) - \nu \alpha +\nu (h(X|W)+h(Y|W))-h(X,Y|W) \nonumber \\
& \quad \quad  -\nu(h(X|W,\hat{X})+h(Y|W,\hat{Y})) \\
& \geq   h(X,Y) - \nu \alpha + \nu \inf_W h(X|W)+h(Y|W)- \frac{1}{\nu}h(X,Y|W) \nonumber \\
& \quad \quad  + \inf_{ \substack{W,\hat{X},\hat{Y}: {\mathbb E}[(X-\hat{X})^2] \leq \Delta \\ {\mathbb E}[(Y-\hat{Y})^2] \leq \Delta }} -\nu(h(X|W,\hat{X})+h(Y|W,\hat{Y})) \label{eqn:minsplit} \\
& \geq h(X,Y) - \nu \alpha + \nu \cdot \min_{0 \preceq K^{\prime} \preceq \begin{pmatrix} 1 & \rho \\ \rho &1  \end{pmatrix}} h(X^{\prime})+h(Y^{\prime})- \frac{1}{\nu}h(X^{\prime},Y^{\prime})  \nonumber \\
& \quad \quad  +\nu \cdot \left( \min_{ \substack{(W,\hat{X},\hat{Y}) \in \mathcal{P}_G: \\ {\mathbb E}[(X-\hat{X})^2] \leq \Delta }} -h(X|W,\hat{X})+  \min_{ \substack{(W,\hat{X},\hat{Y}) \in \mathcal{P}_G: \\ {\mathbb E}[(Y-\hat{Y})^2] \leq \Delta }}-h(Y|W,\hat{Y}) \right)   \label{eqn:twobound} \\
& = h(X,Y) - \nu \alpha -\nu \log{(2 \pi e \Delta)}  + \nu \cdot \min_{0 \preceq K^{\prime} \preceq \begin{pmatrix} 1 & \rho \\ \rho &1  \end{pmatrix}} h(X^{\prime})+h(Y^{\prime})- \frac{1}{\nu}h(X^{\prime},Y^{\prime}) \label{eqn:twoboundeval} \\
& = \frac{1}{2} \log{(2\pi e)^2(1-\rho^2)} - \nu \alpha -\nu \log{(2 \pi e \Delta)}  + \frac{\nu}{2} \log{\frac{\nu^2}{2\nu-1}}-\frac{1-\nu}{2} \log{(2\pi e)^2\frac{(1-\rho)^2}{2\nu-1}} \label{eqn:hypereval2} \\
& = \left\{ \begin{array}{lr} \frac{1}{2} \log^{+}{\frac{1+\rho}{2\Delta e^{\alpha}+\rho-1}}, &   \mbox{ if } 1-\rho \le \Delta e^{\alpha} \le 1  \\
 \frac{1}{2} \log^{+}{\frac{1-\rho^2}{\Delta^2e^{2\alpha}}} , &   \mbox{ if }  \Delta e^\alpha \le 1-\rho.
\end{array} \right. \label{eqn:lastGrayWyner}
\end{align}
where~\eqref{eqn:weakduality} follows from weak duality for $ \nu \geq 0;$ \eqref{eqn:minsplit} follows from bounding the infimum of the sum with the sum of the infima of its summands, and the fact that relaxing the constraints cannot increase the value of the infimum;
\eqref{eqn:twobound} follows from Theorem~\ref{Thm:Hypercontract} where $\nu:=\frac{1}{1+\lambda}$ and for the constraint $0 \leq \lambda<1$ (indeed we can also include zero) to be satisfied we need $ \frac{1}{2} <\nu \leq1$ and \cite[Lemma~1]{Thomas} on each of the terms;
\eqref{eqn:twoboundeval} follows by observing
\begin{align}
h(X|W,\hat{X}) & = h(X-\hat{X}|W,\hat{X}) \\
  & \le h(X-\hat{X}) \\
  & \le \frac{1}{2} \log (2 \pi e \Delta),
\end{align}
where the last step is due to the fact that ${\mathbb E}[(X-\hat{X})^2] \leq \Delta$; (\ref{eqn:hypereval2}) follows from Lemma \ref{lem:lemmasymmery} for $\nu \geq \frac{1}{1+\rho}$; and (\ref{eqn:lastGrayWyner}) follows from maximizing 
\begin{align}
\ell(\nu):=\frac{1}{2} \log{(2\pi e)^2(1-\rho^2)} - \nu \alpha -\nu \log{(2 \pi e \Delta)}  + \frac{\nu}{2} \log{\frac{\nu^2}{2\nu-1}}-\frac{1-\nu}{2} \log{(2\pi e)^2\frac{(1-\rho)^2}{2\nu-1}},
\end{align}
for $1 \geq \nu \geq \frac{1}{1+\rho}$. Now we need to choose the tightest bound $\max_{1 \geq \nu \geq \frac{1}{1+\rho}} \ell(\nu)$. Note that the function $\ell$ is concave since
\begin{align}
\frac{\partial^2 \ell}{\partial \nu^2}=-\frac{1}{\nu(2\nu-1)}<0.
\end{align}
Since it also satisfies monotonicity
\begin{align}
\frac{\partial \ell}{\partial \nu}=\log{\frac{\nu(1-\rho)}{(2\nu-1)\Delta e^{\alpha}}},
\end{align}
its maximal value occurs when the derivative vanishes, that is, when $\nu_*=\frac{\Delta e^{\alpha}}{2\Delta e^{\alpha}-1+\rho}.$ Substituting for the optimal $\nu_*$ we get
\begin{align}
\mathrm{R}_{\Delta, \alpha}(X, Y) \geq \ell \left(\frac{\Delta e^{\alpha}}{2\Delta e^{\alpha}-1+\rho} \right) =\frac{1}{2} \log^+ \frac{1+\rho}{2 \Delta e^{\alpha}-1+\rho},
\end{align}
for $1 \geq \nu_* \geq \frac{1}{1+\rho}$, which means the expression is valid for $1-\rho \leq \Delta e^{\alpha} \leq 1$. 

The other case is $ \Delta e^{\alpha} \leq 1-\rho$. In this case note that $\nu(1-\rho) \geq \nu \Delta e^{\alpha} \geq (2\nu-1)\Delta e^\alpha$ for $\nu \leq 1$. This implies $\frac{\nu(1-\rho)}{(2\nu-1)\Delta e^{\alpha}} \geq 1$, thus we have $\frac{\partial \ell}{\partial \nu} \geq 0$. Since the function is concave and increasing the maximum is attained at $\nu_*=1$, thus 
\begin{align}
\mathrm{R}_{\Delta , \alpha}(X, Y) \geq \ell \left(1 \right) =\frac{1}{2} \log^+ \frac{1-\rho^2}{\Delta^2e^{2\alpha}},
\end{align}
where the expression is valid for $\Delta e^{\alpha} \leq 1-\rho$.
As stated at the beginning of the proof, this is the correct formula assuming unit-variance sources.
For sources of variance $\sigma^2,$ it suffices to replace $\Delta$ with $\Delta/\sigma^2,$ which leads to the expression given in the theorem statement.
\end{proof}

\section{Concluding Remarks}

We studied a natural relaxation of Wyner's common information, whereby the constraint of conditional independence is replaced by an upper bound on the conditional mutual information. This leads to a novel and different optimization problem.
We established a number of properties of this novel quantity, including a chain rule type formula for the case of independent pairs of random variables.
For the case of jointly Gaussian sources, both scalar and vector, we presented a closed-form expression for the relaxed Wyner's common information.
Finally, using the same tool set, we fully characterize the lossy Gaussian Gray-Wyner network subject to mean-squared error.

\appendices
\section{Proof of Lemma \ref{Lemma-relWyner-basicprops}} \label{app:Lemma-relWyner-basicprops}

%
%
%


For Item 1), the inequality follows from the fact that mutual information is non-negative.
If $\gamma \ge I(X;Y),$ we may select $W$ to be a constant, thus we have equality to zero.
If $\gamma < I(X;Y),$ then the lower bound proved in the next item establishes that we cannot have equality to zero.
Also, observe that the Lagrangian for the relaxed Wyner's common information problem of Equation~\eqref{Eq-def-Wyner-relaxed} is $L(\lambda, p(w|x,y)) = I(X,Y;W) + \lambda(I(X;Y|W)-\gamma).$
From Lagrange duality, we thus have the lower bound $C_{\gamma}(X;Y) \ge \inf_{p(w|x,y)} L(\lambda, p(w|x,y)),$ for all positive $\lambda.$
Setting $\lambda=1,$ we have $\inf_{p(w|x,y)} (I(X,Y;W) + I(X;Y|W) -\gamma) = \inf_{p(w|x,y)} (I(X;Y) + I(X;W|Y) + I(Y;W|X) -\gamma) = I(X;Y)-\gamma.$
For Item 2), observe that for fixed $p(x,y,z),$ we can write
\begin{align}
C_\gamma(X;Y) & =  \inf_{p(x,y,z)p(w|x,y): I(X;Y|W)\le \gamma} I(X,Y; W) \\
   & \ge \inf_{p(x,y,z)p(w|x,y): I(X;Y|W)\le \gamma} I(X,Z; W),
\end{align}
due to the Markov chain $(X,Z) - (X,Y) - W.$
Moreover, note that since we consider only joint distributions of the form $p(x,y)p(z|y)p(w|x,y),$
we also have the Markov chain $(X,W)-Y-Z$, which implies the Markov chain $X-(W,Y)-Z$. The latter implies $I(X;Y|W)\ge I(X;Z|W).$
Hence,
\begin{align}
C_\gamma(X;Y) &  \ge \inf_{p(x,y,z)p(w|x,y): I(X;Z|W)\le\gamma} I(X,Z; W) \ge C_\gamma(X;Z).
\end{align}
By the same token, $C_\gamma(Y;Z)\ge C_\gamma(X;Z),$ which completes the proof.
Item 3) follows directly from~\cite[Corollary 4.5]{1055346}.
For Item 4), on the one hand, we have
\begin{align}
C_\gamma((X,Z);Y) &=\inf_{p(w|x,y,z): I(X,Z; Y| W) \leq \gamma} I(X,Z,Y; W) \\
&\leq \inf_{p(w|x,y): I(X;Y| W)+I(Z; Y| W,X) \leq \gamma} I(X,Y; W) +I(Z;W|X,Y) \label{eqn:optimizeoverZupper} \\
&=C_{\gamma} (X;Y)
\end{align}
where in Equation~\eqref{eqn:optimizeoverZupper} we add the constraint that $W$ is selected to be {\it independent} of $Z,$
which cannot reduce the value of the infimum.
Clearly, for such a choice of $W,$ we have $I(Z;Y| W,X)=0$ and $I(Z; W|X,Y)=0,$ which thus establishes the last step.
Conversely, observe that
\begin{align}
C_\gamma((X,Z);Y) &=\inf_{p(w|x,y,z): I(X,Z; Y| W) \leq \gamma} I(X,Y,Z; W) \\
& \geq \inf_{p(w|x,y): I(X;Y| W) \leq \gamma} I(X,Y; W) + \inf_{p(w|x,y,z):  I(X,Z; Y| W) \leq \gamma} I(Z;W|X,Y) \label{eqn:splitsumrelaxLower} \\
& \geq C_{\gamma} (X;Y) \label{eqn:nonnegativeLower}
\end{align}
where $(\ref{eqn:splitsumrelaxLower})$ follows from the fact that the infimum of the sum is lower bounded by the sum of the infimums and the fact that relaxing constraints cannot increase the value of the infimum, and $(\ref{eqn:nonnegativeLower})$ follows from non-negativity of the second term. 

\section{Proof of Lemma~\ref{thm:gensplit}} \label{app:chainsplit}
The achievability part, that is, the inequality
\begin{align}
C_{\gamma} (X^n; Y^n) &\le  \min_{\{\gamma_i\}_{i=1}^n : \sum_{i=1}^n \gamma_i=\gamma}  \sum_{i=1}^n C_{\gamma_i}(X_i; Y_i),
\end{align}
merely corresponds to a particular choice of $W$ in the definition given in Equation~\eqref{Eq-def-Wyner-relaxed}. Specifically, let $W=(W_1, W_2, \ldots, W_n)$, and choose $\{(X_i,Y_i,W_i)\}_{i=1}^n$ to be $n$ independent triples of random vectors.
The converse is more subtle. We prove the case $n=2$ first, followed by induction. For $n=2,$ we have
\begin{align}
\lefteqn{\inf_{p(w|x_1,x_2,y_1,y_2): I(X_1,X_2; Y_1,Y_2|W)\le \gamma} I(X_1,X_2,Y_1,Y_2;W)} \nonumber \\
 &\overset{(a)}{\ge} \inf_{p(w|x_1,x_2,y_1,y_2): I(X_1; Y_1|W) + I(X_2; Y_2|W,X_1)\le \gamma} I(X_1,Y_1;W) +  I(X_2,Y_2;W,X_1) \\
&\overset{(b)}{=} \min_{\gamma_1+\gamma_2 = \gamma} \left\{ \inf_{p(w|x_1,x_2,y_1,y_2): I(X_1; Y_1|W)\le \gamma_1,  I(X_2; Y_2|W,X_1)\le \gamma_2} I(X_1,Y_1;W) +  I(X_2,Y_2;W,X_1)  \right\} \\
&\overset{(c)}{\ge} \min_{\gamma_1+\gamma_2 = \gamma} \left\{   \inf_{p(w|x_1,x_2,y_1,y_2): I(X_1; Y_1|W)\le \gamma_1,  I(X_2; Y_2|W,X_1)\le \gamma_2} I(X_1,Y_1;W) \right. \\
 & \quad \left.+ \inf_{p(\tilde{w}|x_1,x_2,y_1,y_2): I(X_1; Y_1|\tilde{W})\le \gamma_1,  I(X_2; Y_2|\tilde{W},X_1)\le \gamma_2}  I(X_2,Y_2;\tilde{W},X_1) \right\} \\
 &\overset{(d)}{\ge} \min_{\gamma_1+\gamma_2 = \gamma} \left\{   \inf_{p(w|x_1,y_1): I(X_1; Y_1|W)\le \gamma_1} I(X_1,Y_1;W) + \inf_{p(\tilde{w}|x_1,x_2,y_2):  I(X_2; Y_2|\tilde{W},X_1)\le \gamma_2}  I(X_2,Y_2;\tilde{W},X_1) \right\} \\
 &\overset{(e)}{\ge} \min_{\gamma_1+\gamma_2 = \gamma} \left\{   \inf_{p(w_1|x_1,y_1): I(X_1; Y_1|W_1)\le \gamma_1} I(X_1,Y_1;W_1) + \inf_{p(\tilde{w},\tilde{x}_1|x_2,y_2):  I(X_2; Y_2|\tilde{W},\tilde{X}_1)\le \gamma_2}  I(X_2,Y_2;\tilde{W},\tilde{X}_1) \right\}
 \end{align}
where Step $(a)$ follows from 
\begin{align}
I(X_1,X_2,Y_1,Y_2;W)&=I(X_1,Y_1;W)+ I(X_2,Y_2;W|X_1,Y_1) +I(X_1,Y_1;X_2,Y_2)\\
&=I(X_1,Y_1;W)+ I(X_2,Y_2;W, X_1,Y_1) \\
& \geq  I(X_1,Y_1;W)+ I(X_2,Y_2;W, X_1)
\end{align}
and the constraint is relaxed as follows
\begin{align}
\gamma \geq I(X_1,X_2;Y_1,Y_2|W)&=I(X_1;Y_1,Y_2|W)+ I(X_2;Y_1,Y_2|W,X_1) \\
&\geq I(X_1;Y_1|W)+ I(X_2;Y_2|W, X_1),
\end{align}
Step $(b)$ follows from splitting the minimization, Step $(c)$ follows from minimizing each subproblem individually which would result in a lower bound to the original problem, Step $(d)$ follows from reducing the number of constraints resulting into a lower bound, Step $(e)$ follows from introducing $\tilde{X}_1$ as a random variable to be optimized, whereas before $X_1$ had a fixed distribution. In other words, the preceding minimization is taken over $p(\tilde{w}|x_2,y_2,x_1)p(x_1|x_2,y_2)$ where $p(x_1|x_2,y_2)$ has a fixed distribution, whereas now the minimization is taken over $p(\tilde{w}|x_2,y_2,\tilde{x}_1)p(\tilde{x}_1|x_2,y_2)$, where we also optimize over $p(\tilde{x}_1|x_2,y_2)$. Lastly, denoting $W_2 = (\tilde{W},\tilde{X}_1),$ this can be expressed as
\begin{align}
\lefteqn{\inf_{p(w|x_1,x_2,y_1,y_2): I(X_1,X_2; Y_1,Y_2|W)\le \gamma} I(X_1,X_2,Y_1,Y_2;W)} \nonumber \\
 & \ge \min_{\gamma_1+\gamma_2 = \gamma} \{   \inf_{p(w_1|x_1,y_1): I(X_1; Y_1|W_1)\le \gamma_1} I(X_1,Y_1;W_1) + \inf_{p(w_2|x_2,y_2):  I(X_2; Y_2|W_2)\le \gamma_2}  I(X_2,Y_2;W_2) \}
\end{align}
After proving it for $n=2$, we will use the standard induction. In other words, we will assume that the converse holds for $n-1$ i.e. 
\begin{align} \label{eqn:assumpnminusone}
C_{\bar{\gamma}} (X^{n-1}; Y^{n-1}) &\geq  \min_{\gamma_i : \sum_{i=1}^{n-1} \gamma_i=\bar{\gamma}}  \sum_{i=1}^{n-1} C_{\gamma_i}(X_i; Y_i),
\end{align}
after we prove it for $n$ as follows,
\begin{align}
\lefteqn{\inf_{p(w|x^n,y^n): I(X^n; Y^n|W)\le \gamma} I(X^n,Y^n;W)} \nonumber \\
 &\overset{(f)}{\ge} \inf_{p(w|x^n,y^n): I(X^{n-1}; Y^{n-1}|W) + I(X_n; Y_n|W,X^{n-1})\le \gamma} I(X^{n-1},Y^{n-1};W) +  I(X_n,Y_n;W,X^{n-1}) \\
&\overset{(g)}{=} \min_{\bar{\gamma}+\gamma_n = \gamma} \left\{ \inf_{p(w|x^n,y^n): \substack{ I(X^{n-1}; Y^{n-1}|W)\leq \sum_{i=1}^{n-1} \gamma_i, \\ I(X_n; Y_n|W,X^{n-1})\le \gamma_n}} I(X^{n-1},Y^{n-1};W) +  I(X_n,Y_n;W,X^{n-1})  \right\} \\
&\overset{(h)}{\ge} \min_{\bar{\gamma}+\gamma_n = \gamma} \left\{ \inf_{p(w|x^n,y^n): \substack{ I(X^{n-1}; Y^{n-1}|W)\leq \sum_{i=1}^{n-1} \gamma_i, \\ I(X_n; Y_n|W,X^{n-1})\le \gamma_n}} I(X^{n-1},Y^{n-1};W) \right. \\
 & \quad \left.+ \inf_{p(\tilde{w}|x^n,y^n): \substack{ I(X^{n-1}; Y^{n-1}|\tilde{W})\leq \sum_{i=1}^{n-1} \gamma_i, \\ I(X_n; Y_n|\tilde{W},X^{n-1})\le \gamma_n}}  I(X_n,Y_n;\tilde{W},X^{n-1}) \right\} \\
 &\overset{(i)}{\ge} \min_{\bar{\gamma}+\gamma_n = \gamma} \left\{ \inf_{p(w|x^{n-1},y^{n-1}): I(X^{n-1}; Y^{n-1}|W)\leq \sum_{i=1}^{n-1} \gamma_i} I(X^{n-1},Y^{n-1};W) \right. \\
 & \quad \left. + \inf_{p(\tilde{w}|x^n,y_n):  I(X_n; Y_n|\tilde{W},X^{n-1})\le \gamma_n}  I(X_n,Y_n;\tilde{W},X^{n-1}) \right\} \\
 &\overset{(j)}{\ge} \min_{\bar{\gamma}+\gamma_n = \gamma} \left\{ \inf_{p(w|x^{n-1},y^{n-1}): I(X^{n-1}; Y^{n-1}|W)\leq \sum_{i=1}^{n-1} \gamma_i} I(X^{n-1},Y^{n-1};W) \right. \\
 & \quad \left. + \inf_{p(\tilde{w},\tilde{x}^{n-1}|x_n,y_n):  I(X_n; Y_n|\tilde{W},\tilde{X}^{n-1})\le \gamma_n}  I(X_n,Y_n;\tilde{W},\tilde{X}^{n-1}) \right\} \\
 & \overset{(k)}{=} \min_{\bar{\gamma}+\gamma_n = \gamma} \left\{ C_{\bar{\gamma}} I(X^{n-1};Y^{n-1}) + \inf_{p(w_n|x_n,y_n):  I(X_n; Y_n|W_n)\le \gamma_n}  I(X_n,Y_n;W_n)\right\} \\
 & \overset{(\ell)}{\geq} \min_{\bar{\gamma}+\gamma_n = \gamma} \left\{ C_{\gamma_n}(X_n;Y_n) + \min_{\gamma_i : \sum_{i=1}^{n-1} \gamma_i=\bar{\gamma}}  \sum_{i=1}^{n-1} C_{\gamma_i}(X_i; Y_i) \right\} \\
& = \min_{\gamma_i : \sum_{i=1}^{n} \gamma_i=\gamma}  \sum_{i=1}^{n} C_{\gamma_i}(X_i; Y_i),
\end{align} 
where Step $(f)$ follows from 
\begin{align}
I(X^n,Y^n;W)&=I(X^{n-1},Y^{n-1};W)+ I(X_n,Y_n;W|X^{n-1},Y^{n-1}) +I(X_n,Y_n;X^{n-1},Y^{n-1})\\
&=I(X^{n-1},Y^{n-1};W)+ I(X_n,Y_n;W,X^{n-1},Y^{n-1}) \\
& \geq  I(X^{n-1},Y^{n-1};W)+ I(X_n,Y_n;W,X^{n-1})
\end{align}
and the constraint is relaxed as follows
\begin{align}
\gamma \geq I(X^n;Y^n|W)&=I(X^{n-1};Y^n|W)+ I(X_n;Y^n|W,X^{n-1}) \\
&\geq I(X^{n-1};Y^{n-1}|W)+ I(X_n;Y_n|W,X^{n-1}).
\end{align}
Step $(g)$ follows from the same argument as $(b)$, Step $(h)$ follows from the same argument as $(c)$, Step $(i)$ follows follows from the same argument as $(d)$, Step $(j)$ follows from a similar argument as $(e)$, Step $(k)$ follows from denoting $W_n = (\tilde{W},\tilde{X}^{n-1}),$ and Step $(\ell)$ follows from the induction hypothesis (\ref{eqn:assumpnminusone}).



\section{Proof of Theorem \ref{Thm:Hypercontract}} \label{app:HyperGauss}
The techniques to establish the optimality of Gaussian distributions is used in \cite{Geng--Nair} and is known as factorization of lower convex envelope. Let us define the following object
\begin{align} \label{eqn:convexenv_object}
V(K)=\inf_{(X,Y):K_{(X,Y)}=K} \inf_W h(Y|W)+h(X|W) - (1+\lambda) h(X, Y|W),
\end{align}
where $\lambda$ is a real number, $0<\lambda<1$  and $K$ is an arbitrary covariance matrix. Let $\ell_{\lambda}(X,Y)=h(Y)+h(X) - (1+\lambda) h(X,Y)$, and $\breve{\ell}_{\lambda}(X,Y) =\inf_W h(Y|W)+h(X|W) - (1+\lambda) h(X, Y|W)$, where $\breve{\ell}_{\lambda}(X,Y)$ is the lower convex envelope of $\ell_{\lambda}(X,Y)$. 

First, in Section~\ref{app:HyperGauss:infattained}, we prove that the infimum is attained, then, in Section~\ref{app:HyperGauss:Gauss}, we prove that a Gaussian $W$ attains the infimum in Equation~\eqref{eqn:convexenv_object}. Together, these two arguments establish Theorem \ref{Thm:Hypercontract}.

\subsection{The infimum in Equation~\eqref{eqn:convexenv_object} is attained}\label{app:HyperGauss:infattained}
\begin{proposition}[Proposition 17 in \cite{Geng--Nair}]
Consider a sequence of random variables $\left\{X_n,Y_n\right\}$ such that $K_{(X_n,Y_n)}\preceq K$ for all $n$, then the sequence is tight.
\end{proposition}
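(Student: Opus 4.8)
The plan is to derive tightness directly from a uniform second-moment bound together with Chebyshev's inequality. Since the functional $V(K)$ in Equation~\eqref{eqn:convexenv_object} depends on $(X,Y)$ only through its covariance matrix, I would first note that we may assume without loss of generality that each pair $(X_n,Y_n)$ has zero mean; with this normalization the hypothesis $K_{(X_n,Y_n)}\preceq K$ becomes a statement about second moments.

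The first step is to extract scalar consequences from the semidefinite ordering: testing $K_{(X_n,Y_n)}\preceq K$ against the standard basis vectors $e_1,e_2$ gives $\var(X_n)=e_1^{T}K_{(X_n,Y_n)}e_1\le e_1^{T}Ke_1=:\sigma_X^2$ and $\var(Y_n)=e_2^{T}K_{(X_n,Y_n)}e_2\le e_2^{T}Ke_2=:\sigma_Y^2$, both uniformly in $n$. The second step: given $\epsilon>0$, choose $M=M(\epsilon)$ with $\sigma_X^2/M^2<\epsilon/2$ and $\sigma_Y^2/M^2<\epsilon/2$. Chebyshev's inequality then yields $\P(|X_n|>M)\le\sigma_X^2/M^2<\epsilon/2$ and $\P(|Y_n|>M)<\epsilon/2$ for every $n$, so by the union bound $\P\bigl((X_n,Y_n)\notin[-M,M]^2\bigr)<\epsilon$ for all $n$. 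Since $[-M,M]^2$ is compact, this is precisely tightness of the family $\{(X_n,Y_n)\}$ in Prokhorov's sense; by Prokhorov's theorem every subsequence then has a further weakly convergent subsequence, which is the property actually invoked in the remainder of the argument that the infimum in Equation~\eqref{eqn:convexenv_object} is attained.

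I do not expect a genuine obstacle: the core is a one-line Chebyshev estimate once the variances are known to be uniformly bounded. The only point meriting care is the reduction to zero mean, which is what converts the positive-semidefinite hypothesis on covariance matrices into a bound on raw second moments (and hence on tail probabilities). If one instead reads the hypothesis as $\mathbb{E}\bigl[(X_n,Y_n)^{T}(X_n,Y_n)\bigr]\preceq K$, no normalization is needed and the same three steps go through verbatim.
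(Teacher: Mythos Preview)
Your argument is correct: extracting the diagonal entries from the semidefinite ordering gives uniformly bounded variances, and Chebyshev's inequality together with a union bound then yields tightness. The paper itself does not prove this proposition---it simply quotes it as Proposition~17 in \cite{Geng--Nair} and uses it as a black box---so there is no proof in the paper to compare your approach against.
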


\begin{theorem}[Prokhorov] \label{thm:weakconvergence}
If $\left\{X_n,Y_n\right\}$ is a tight sequence then there exists a subsequence $\left\{X_{n_i},Y_{n_i}\right\}$ and a limiting probability distribution $\left\{X_*,Y_*\right\}$ such that $\left\{X_{n_i},Y_{n_i}\right\} \overset{w}{\Rightarrow} \left\{X_*,Y_*\right\}$ converges weakly in distribution.
\end{theorem}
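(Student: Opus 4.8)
The plan is to prove this by the classical route: a Helly-type diagonal-subsequence argument applied to the joint distribution functions, with tightness invoked at the end to guarantee that no probability mass escapes to infinity. Since $(X_n,Y_n)$ takes values in the Polish space $\mathbb{R}^2$, this is exactly the ``easy'' half of Prokhorov's theorem (tightness implies relative sequential compactness in the topology of weak convergence), so no appeal to abstract Polish-space machinery is really needed.

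First I would pass to distribution functions $F_n(u,v)=\P(X_n\le u, Y_n\le v)$. Fix a countable dense set $Q\subset\mathbb{R}^2$, say $\mathbb{Q}^2$. Because every $F_n$ is $[0,1]$-valued, a Cantor diagonal argument over $Q$ yields a subsequence $\{n_i\}$ along which $F_{n_i}(q)$ converges for each $q\in Q$ to some limit $G(q)$. Define $F_*(u,v)=\inf\{\,G(q):q\in Q,\ q\succ(u,v)\,\}$, where $\succ$ denotes the coordinatewise strict order; a routine verification shows that $F_*$ is coordinatewise non-decreasing, right-continuous, and has non-negative rectangle increments, hence is the distribution function of some sub-probability measure $\mu_*$ on $\mathbb{R}^2$. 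Standard one-dimensional reasoning, applied coordinatewise, also gives $F_{n_i}(u,v)\to F_*(u,v)$ at every continuity point $(u,v)$ of $F_*$, and the set of such continuity points is dense (the marginals of $\mu_*$ have at most countably many atoms).

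Second, tightness enters to promote $\mu_*$ to a genuine probability measure. For every $\varepsilon>0$ tightness provides a bounded rectangle $R_\varepsilon=[-M,M]^2$ with $\P((X_n,Y_n)\in R_\varepsilon)\ge 1-\varepsilon$ for all $n$; evaluating $F_{n_i}$ at continuity points of $F_*$ near the corners of $R_\varepsilon$ and passing to the limit forces $\mu_*(\mathbb{R}^2)\ge 1-\varepsilon$ for every $\varepsilon$, so $\mu_*$ is in fact a probability measure. Taking $(X_*,Y_*)\sim\mu_*$, the convergence $F_{n_i}\to F_*$ at continuity points is, by the portmanteau characterization of weak convergence, precisely $(X_{n_i},Y_{n_i})\overset{w}{\Rightarrow}(X_*,Y_*)$.

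The main obstacle is the multivariate bookkeeping rather than any deep idea: one must check that the pointwise limit of joint distribution functions along a dense set can be completed to a bona fide joint distribution function — monotonicity in each coordinate is not sufficient, one needs the rectangle-increment inequalities — and that convergence upgrades from the dense set to all continuity points of the limit. Tightness is exactly the ingredient that excludes the degenerate alternative in which $\mu_*$ is a strict sub-probability measure because mass has leaked to infinity, so the crux is arranging the $\varepsilon$-estimates so that they survive the passage to the limit.
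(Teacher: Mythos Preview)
Your proof is correct and follows the classical Helly selection argument for Prokhorov's theorem in $\mathbb{R}^2$. However, there is nothing to compare against: the paper does not prove this statement at all. It is quoted as a standard named result (Prokhorov's theorem) and used as a tool in the proof that the infimum in Equation~\eqref{eqn:convexenv_object} is attained; no argument is supplied. So your write-up is not an alternative to the paper's proof but rather a self-contained justification of a fact the paper takes for granted.
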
 

Note that $\ell_{\lambda}(X,Y)=h(Y)+h(X) - (1+\lambda) h(X,Y)$ can be written as $(1+\lambda)I(X;Y)-\lambda[h(X)+h(Y)]$. Thus, it is enough to show that this expression is lower semi-continuous. We will show by utilizing the following theorem.

\begin{theorem}[\cite{Posner}]
If $p_{X_n,Y_n} \overset{w}{\Rightarrow} p_{X,Y}$ and $q_{X_n,Y_n} \overset{w}{\Rightarrow} q_{X,Y}$, then $D(p_{X,Y}||q_{X,Y}) \leq \liminf\limits_{n \to \infty} D(p_{X_n,Y_n}||q_{X_n,Y_n})$.
\end{theorem}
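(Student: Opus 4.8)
The plan is to obtain the claimed lower semicontinuity from the Donsker--Varadhan variational characterization of relative entropy. Recall that for any two Borel probability measures $p, q$ on a Polish space,
\begin{align}
D(p\|q) = \sup_{f \in C_b} \Big\{ \int f \, dp - \log \int e^{f} \, dq \Big\}, \nonumber
\end{align}
where the supremum is over all bounded continuous functions $f$ (and the right-hand side equals $+\infty$ exactly when $p$ is not absolutely continuous with respect to $q$, consistently with the convention $D(p\|q)=+\infty$). I would first record this identity: the inequality ``$\le$'' is the Gibbs inequality applied to the tilted measure $e^{f}\,dq \big/ \int e^{f}\,dq$, while ``$\ge$'' follows by approximating $\log\frac{dp}{dq}$ by bounded continuous functions (e.g.\ via truncation and Lusin's theorem), or one may simply invoke the variational principle in its standard form on Polish spaces.

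The key observation is that for each \emph{fixed} $f \in C_b$, the functional $(p,q)\mapsto \int f\,dp - \log\int e^{f}\,dq$ is continuous with respect to weak convergence: $f$ and $e^{f}$ are bounded and continuous, so $\int f\,dp_n \to \int f\,dp$ and $\int e^{f}\,dq_n \to \int e^{f}\,dq$ by definition of weak convergence, and $\log$ is continuous at the limit since $\int e^{f}\,dq_n \ge e^{-\|f\|_\infty}>0$ uniformly in $n$. Hence, under the hypotheses $p_{X_n,Y_n}\overset{w}{\Rightarrow} p_{X,Y}$ and $q_{X_n,Y_n}\overset{w}{\Rightarrow} q_{X,Y}$, for every such $f$,
\begin{align}
\int f\,dp_{X,Y} - \log\int e^{f}\,dq_{X,Y} = \lim_{n\to\infty}\Big( \int f\,dp_{X_n,Y_n} - \log\int e^{f}\,dq_{X_n,Y_n} \Big) \le \liminf_{n\to\infty} D(p_{X_n,Y_n}\|q_{X_n,Y_n}), \nonumber
\end{align}
where the inequality uses that, by the variational formula, each term in the sequence is a lower bound for the corresponding $D(p_{X_n,Y_n}\|q_{X_n,Y_n})$. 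Taking the supremum over $f\in C_b$ on the left and applying the variational formula once more yields $D(p_{X,Y}\|q_{X,Y}) \le \liminf_{n\to\infty} D(p_{X_n,Y_n}\|q_{X_n,Y_n})$, which is the claim.

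The main obstacle is establishing the variational formula with the supremum restricted to \emph{bounded continuous} test functions, since it is precisely this restriction that makes each functional weakly continuous; the restriction to $C_b$ (rather than all bounded measurable functions) is harmless because such functions are $L^1(q)$-dense and the quantity $\int f\,dp - \log\int e^{f}\,dq$ is stable under the relevant approximation, but this is the step where the proof must be done carefully (or delegated to a reference). A secondary point, which is automatic, is the degenerate case $D(p_{X,Y}\|q_{X,Y})=+\infty$: then the supremum on the left is $+\infty$, and the displayed inequality forces $\liminf_{n\to\infty} D(p_{X_n,Y_n}\|q_{X_n,Y_n})=+\infty$ as well.
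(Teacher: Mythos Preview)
Your argument via the Donsker--Varadhan variational formula is correct and is a standard, clean route to lower semicontinuity of relative entropy under weak convergence. There is nothing to compare against, however: the paper does not prove this statement at all. It is quoted as a known result with a citation to Posner, and is then immediately applied (with $q=p_Xp_Y$) to deduce lower semicontinuity of mutual information. So your proposal supplies a proof where the paper simply invokes the literature.

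For context, Posner's original argument proceeds differently, via finite measurable partitions: one writes $D(p\|q)=\sup_{\mathcal{P}} D(p^{\mathcal{P}}\|q^{\mathcal{P}})$ over finite partitions $\mathcal{P}$, restricts to partitions whose cells are $q$-continuity sets so that the Portmanteau theorem applies, and uses continuity of the finite-alphabet divergence. Your variational approach is arguably slicker, since continuity of $p\mapsto\int f\,dp$ and $q\mapsto\int e^f\,dq$ for bounded continuous $f$ is the very definition of weak convergence; the only place requiring care, as you correctly flag, is that the supremum over $C_b$ really attains the full divergence, which on Polish spaces is standard.
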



Observe that $I(X;Y)=D(p_{X,Y}||q_{X,Y})$, where $q_{X,Y}=p_Xp_Y$. For the theorem to hold we need to check the assumptions.
First, from Theorem~\ref{thm:weakconvergence}, we have $p_{X_n,Y_n} \overset{w}{\Rightarrow} p_{X,Y}.$
Second, since the marginal distributions converge weakly if the joint distribution converges weakly, we also have $q_{X_n,Y_n} \overset{w}{\Rightarrow} q_{X,Y}.$
Therefore,
\begin{align} \label{eqn:semicont1}
I(X;Y) \leq \liminf\limits_{n \to \infty} I(X_n;Y_n).
\end{align}
To preserve the covariance matrix $K_{(X,Y)}$, there are three degrees of freedom plus one degree of freedom coming from minimizing the objective, thus $|\mathcal{W}| \leq 4$ is enough to attain the minimum.  

Let us introduce $\delta > 0$ and define $N_{\delta}\sim \mathcal{N}(0,\delta)$, being independent of $\{X_n\},X,\{Y_n\}$ and $Y$. From the entropy power inequality, we have 
\begin{align} \label{eqn:semicont2}
h(X_n+N_{\delta}) &\geq h(X_n) \\
h(Y_n+N_{\delta}) &\geq h(Y_n),
\end{align}
and moreover, for Gaussian perturbations, we have
\begin{align} \label{eqn:semicont3}
\liminf_{n\to \infty} h(X_n+N_{\delta}) =h(X+N_{\delta}).
\end{align}
This results in 
\begin{align}
\liminf_{n \to \infty} \ell_{\lambda}(X_n,Y_n) &=\liminf_{n \to \infty} (1+\lambda)I(X_n;Y_n)-\lambda[h(X_n)+h(Y_n)] \\
& \geq \liminf_{n \to \infty} (1+\lambda)I(X_n;Y_n)-\lambda[h(X_n+N_{\delta})+h(Y_n+N_{\delta})] \label{eqn:EPIbound} \\
& \geq (1+\lambda)I(X;Y)-\lambda[h(X+N_{\delta})+h(Y+N_{\delta})], \label{eqn:semicont4}
\end{align}
where (\ref{eqn:EPIbound}) follows from (\ref{eqn:semicont2}) and (\ref{eqn:semicont4}) follows from (\ref{eqn:semicont1}), (\ref{eqn:semicont3}). Letting $\delta \to 0$, we obtain the weak semicontinuity of our object $\liminf_{n \to \infty} \ell_{\lambda}(X_n,Y_n) \geq  \ell_{\lambda}(X,Y)$.

\subsection{A Gaussian auxiliary $W$ attains the infimum in Equation~\eqref{eqn:convexenv_object}}\label{app:HyperGauss:Gauss}
This proof closely follows the arguments in~\cite{Hyper_Gauss}.
We include it for completeness.
We start by creating two identical and independent copies of the minimizer $(W,X,Y)$, which are $(W_1,X_1,Y_1)$ and $(W_2,X_2,Y_2)$. In addition, let us define $X_A=\frac{X_1+X_2}{\sqrt{2}}$, $X_B=\frac{X_1-X_2}{\sqrt{2}}$, $Y_A=\frac{Y_1+Y_2}{\sqrt{2}}$ and $Y_B=\frac{Y_1-Y_2}{\sqrt{2}}$. Thus, we have
\begin{align}
2V(K)&= h(X_1, X_2|W_1, W_2)+ h(Y_1, Y_2|W_1, W_2) -(1+\lambda)h(X_1, X_2, Y_1, Y_2|W_1, W_2) \\
&= h(X_A, X_B|W_1, W_2)+ h(Y_A, Y_B|W_1, W_2) -(1+\lambda)h(X_A, X_B, Y_A, Y_B|W_1, W_2) \label{eqn:entpreserve} \\
&= h(X_A|W_1, W_2)+ h(Y_A|W_1, W_2) -(1+\lambda)h(X_A,Y_A|W_1, W_2) \\
& \quad \quad + h(X_B|X_A,Y_A,W_1, W_2)+ h(Y_B|X_A,Y_A,W_1, W_2) -(1+\lambda)h(X_B, Y_B|X_A,Y_A,W_1, W_2) \\
& \quad \quad + I(X_A; Y_B|Y_A, W_1, W_2) +I(Y_A; X_B|X_A, W_1, W_2) \\
& \geq 2V(K) + I(X_A; Y_B|Y_A, W_1, W_2) +I(Y_A; X_B|X_A, W_1, W_2), \label{eqn:optimizerbound}
\end{align}
where (\ref{eqn:entpreserve}) follows from entropy preservation under bijective transformation and (\ref{eqn:optimizerbound}) follows from definition of $V(K)$ such that $K_{(X_A,Y_A)} \preceq K$. This would imply that 
\begin{align} \label{eqn:alternatezero1}
I(X_A; Y_B|Y_A, W_1, W_2)=I(Y_A; X_B|X_A, W_1, W_2)&=0.
\end{align}
Similarly we get 
\begin{align} \label{eqn:alternatezero2}
I(X_B; Y_A|Y_B, W_1, W_2)=I(Y_B; X_A|X_B, W_1, W_2)&=0,
\end{align}
by switching the roles of index $_A$ and $_B$. Through another way of factorization we get
\begin{align}
2V(K)&= h(X_1, X_2|W_1, W_2)+ h(Y_1, Y_2|W_1, W_2) -(1+\lambda)h(X_1, X_2, Y_1, Y_2|W_1, W_2) \\
&= h(X_A, X_B|W_1, W_2)+ h(Y_A, Y_B|W_1, W_2) -(1+\lambda)h(X_A, X_B, Y_A, Y_B|W_1, W_2) \\
&= h(X_A|W_1, W_2)+ h(Y_A|W_1, W_2) -(1+\lambda)h(X_A,Y_A|W_1, W_2) \\
& \quad \quad + h(X_B|W_1, W_2)+ h(Y_B|W_1, W_2) -(1+\lambda)h(X_B, Y_B|W_1, W_2) \\
& \quad \quad -I(Y_A; Y_B|W_1, W_2) +(1+\lambda)I(X_A, Y_A; X_B, Y_B|W_1, W_2) -I(X_A; X_B|W_1, W_2) \\
& \geq 2V(K) -I(Y_A; Y_B|W_1, W_2) +(1+\lambda)I(X_A, Y_A; X_B, Y_B|W_1, W_2) -I(X_A; X_B|W_1, W_2), 
\end{align}
which implies that 
\begin{align} \label{eqn:thirdapproachfactortize}
-I(Y_A; Y_B|W_1, W_2) +(1+\lambda)I(X_A, Y_A; X_B, Y_B|W_1, W_2) -I(X_A; X_B|W_1, W_2) \leq 0.
\end{align}

Yet, through another way of factorization we get
\begin{align}
2V(K)&= h(X_1, X_2|W_1, W_2)+ h(Y_1, Y_2|W_1, W_2) -(1+\lambda)h(X_1, X_2, Y_1, Y_2|W_1, W_2) \\
&= h(X_A, X_B|W_1, W_2)+ h(Y_A, Y_B|W_1, W_2) -(1+\lambda)h(X_A, X_B, Y_A, Y_B|W_1, W_2) \\
&= h(X_A|X_B,Y_B,W_1, W_2)+ h(Y_A|X_B,Y_B,W_1, W_2) -(1+\lambda)h(X_A,Y_A|X_B,Y_B,W_1, W_2) \\
& \quad \quad + h(X_B|X_A,Y_A,W_1, W_2)+ h(Y_B|X_A,Y_A,W_1, W_2) -(1+\lambda)h(X_B, Y_B|X_A,Y_A,W_1, W_2) \\
& \quad \quad + I(X_B, Y_B; Y_A|W_1, W_2) + I(X_A; Y_B|Y_A, W_1, W_2) -(1+\lambda)I(X_B, Y_B; X_A, Y_A|W_1, W_2) \\
& \quad \quad + I(Y_A; X_B|X_A, W_1, W_2) + I(X_B, Y_B; X_A|W_1, W_2) \\
& \geq 2V(K) + I(X_B, Y_B; Y_A|W_1, W_2) + I(X_A; Y_B|Y_A, W_1, W_2) \\
& \quad \quad -(1+\lambda)I(X_B, Y_B; X_A, Y_A|W_1, W_2) + I(Y_A; X_B|X_A, W_1, W_2) + I(X_B, Y_B; X_A|W_1, W_2),
\end{align}
which implies that 
\begin{align} 
& I(X_B, Y_B; Y_A|W_1, W_2) + I(X_A; Y_B|Y_A, W_1, W_2) -(1+\lambda)I(X_B, Y_B; X_A, Y_A|W_1, W_2)\\
& \quad \quad  + I(Y_A; X_B|X_A, W_1, W_2) + I(X_B, Y_B; X_A|W_1, W_2) \leq 0.
\end{align}
By substituting (\ref{eqn:alternatezero1}) and (\ref{eqn:alternatezero2}), we obtain
\begin{align} \label{eqn:simplifysecondapproach}
& I(Y_B; Y_A|W_1, W_2)  + I(X_B; X_A|W_1, W_2) -(1+\lambda)I(X_B, Y_B; X_A, Y_A|W_1, W_2) \leq 0.
\end{align}
Equation (\ref{eqn:thirdapproachfactortize}) and (\ref{eqn:simplifysecondapproach}) imply that 
\begin{align} \label{eqn:simplifyequalityzerothreeapproaches}
& I(Y_B; Y_A|W_1, W_2)  + I(X_B; X_A|W_1, W_2) -(1+\lambda)I(X_B, Y_B; X_A, Y_A|W_1, W_2) =0.
\end{align}

Another way of factorizing is the follows 
\begin{align}
2V(K)&= h(X_1, X_2|W_1, W_2)+ h(Y_1, Y_2|W_1, W_2) -(1+\lambda)h(X_1, X_2, Y_1, Y_2|W_1, W_2) \\
&= h(X_A, X_B|W_1, W_2)+ h(Y_A, Y_B|W_1, W_2) -(1+\lambda)h(X_A, X_B, Y_A, Y_B|W_1, W_2) \\
&= h(X_A|X_B,Y_B,W_1, W_2)+ h(Y_A|X_B,Y_B,W_1, W_2) -(1+\lambda)h(X_A,Y_A|X_B,Y_B,W_1, W_2) \\
& \quad \quad + h(X_B|X_A,W_1, W_2)+ h(Y_B|X_A,W_1, W_2) -(1+\lambda)h(X_B, Y_B|X_A,W_1, W_2) \\
& \quad \quad + I(X_A; Y_B|W_1, W_2) + I(Y_A; X_B|Y_B, W_1, W_2) -(1+\lambda)I(X_B, Y_B; X_A|W_1, W_2) \\
& \quad \quad + I(X_B, Y_B; X_A|W_1, W_2) \\
& \geq 2V(K) + I(X_A; Y_B|W_1, W_2) + I(Y_A; X_B|Y_B, W_1, W_2) -(1+\lambda)I(X_B, Y_B; X_A|W_1, W_2) \\
& \quad \quad + I(X_B, Y_B; X_A|W_1, W_2),
\end{align}
which implies that 
\begin{align}
I(X_A; Y_B|W_1, W_2) + I(Y_A; X_B|Y_B, W_1, W_2) -\lambda I(X_B, Y_B; X_A|W_1, W_2) \leq 0.
\end{align}
Using $(\ref{eqn:alternatezero1})$ and $(\ref{eqn:alternatezero2})$, we simplify the above equation into
\begin{align}
I(X_A; Y_B|W_1, W_2) -\lambda I(X_B; X_A|W_1, W_2) \leq 0.
\end{align}
Switching the roles of $_A$ and $_B$ we obtain
\begin{align} \label{eqn:mutualinformationcontraction}
I(X_B; Y_A|W_1, W_2) -\lambda I(X_A; X_B|W_1, W_2) \leq 0.
\end{align}

By factorizing for the last time we get 
\begin{align}
2V(K)&= h(X_1, X_2|W_1, W_2)+ h(Y_1, Y_2|W_1, W_2) -(1+\lambda)h(X_1, X_2, Y_1, Y_2|W_1, W_2) \\
&= h(X_A, X_B|W_1, W_2)+ h(Y_A, Y_B|W_1, W_2) -(1+\lambda)h(X_A, X_B, Y_A, Y_B|W_1, W_2) \\
&= h(X_A|X_B,Y_B,W_1, W_2)+ h(Y_A|X_B,Y_B,W_1, W_2) -(1+\lambda)h(X_A,Y_A|X_B,Y_B,W_1, W_2) \\
& \quad \quad + h(X_B|Y_A,W_1, W_2)+ h(Y_B|XY_A,W_1, W_2) -(1+\lambda)h(X_B, Y_B|Y_A,W_1, W_2) \\
& \quad \quad +I(Y_A; X_B, Y_B|W_1, W_2) -(1+\lambda) I(X_B, Y_B; Y_A|W_1, W_2) \\
& \quad \quad + I(Y_B; X_A|W_1, W_2) + I(Y_B; X_A|X_B, W_1, W_2) \\
& \geq 2V(K) -\lambda I(X_B, Y_B; Y_A|W_1, W_2) \\
& \quad  \quad+ I(Y_B; X_A|W_1, W_2) + I(Y_B; X_A|X_B, W_1, W_2),
\end{align}
which implies that 
\begin{align}
-\lambda I(X_B, Y_B; Y_A|W_1, W_2) + I(Y_B; X_A|W_1, W_2) + I(Y_B; X_A|X_B, W_1, W_2) \leq 0.
\end{align}
The above inequality is simplified by using (\ref{eqn:alternatezero1}) and (\ref{eqn:alternatezero2}) into 
\begin{align}
-\lambda I(Y_B; Y_A|W_1, W_2) + I(Y_B; X_A|W_1, W_2) \leq 0.
\end{align}
Switching the roles of $_A$ and $_B$ we obtain
\begin{align} \label{eqn:mutualinfocontraction2}
-\lambda I(Y_A; Y_B|W_1, W_2) + I(Y_A; X_B|W_1, W_2) \leq 0.
\end{align}
Observe that
\begin{align}
I(X_A; Y_B|X_B, Y_A, W_1, W_2) &= I(X_A; Y_B, X_B|Y_A, W_1, W_2)- I(X_A; X_B|Y_A, W_1, W_2) \\
&=I(X_A, Y_A; Y_B, X_B|W_1, W_2) -I(Y_A; Y_B, X_B|W_1, W_2) \\
& \quad \quad -I(X_A, Y_A; X_B|W_1, W_2) + I(Y_A; X_B|W_1, W_2) \\
&=I(X_A, Y_A; Y_B, X_B|W_1, W_2) -I(Y_A; Y_B|W_1, W_2) \\
& \quad \quad -I(X_A; X_B|W_1, W_2) + I(Y_A; X_B|W_1, W_2)  -I(Y_A; Y_B|W_1, W_2)   \label{eqn:applicationalternate}\\
&= \left( \frac{1}{1+\lambda}-1 \right) I(X_A; X_B|W_1, W_2) + I(Y_A; X_B|W_1, W_2) \\
& \quad \quad + \left( \frac{1}{1+\lambda}-1 \right) I(Y_A; Y_B|W_1, W_2)  \geq 0 \label{eqn:abcd} 
\end{align}
where~\eqref{eqn:applicationalternate} follows from~\eqref{eqn:alternatezero1} and~\eqref{eqn:alternatezero2}, Equation~\eqref{eqn:abcd} follows from (\ref{eqn:simplifyequalityzerothreeapproaches}) and from non-negativity of the conditional mutual information term we obtain 
\begin{align} \label{eqn:cdef}
\left( \frac{1}{1+\lambda}-1 \right) I(X_A; X_B|W_1, W_2) + I(Y_A; X_B|W_1, W_2) + \left( \frac{1}{1+\lambda}-1 \right) I(Y_A; Y_B|W_1, W_2)  \geq 0.
\end{align}
Using $(\ref{eqn:mutualinformationcontraction})$, the above equation simplifies into 
\begin{align}
\left( \frac{1}{1+\lambda}-1 \right) I(X_A; X_B|W_1, W_2) + \lambda I(X_A; X_B|W_1, W_2) + \left( \frac{1}{1+\lambda}-1 \right) I(Y_A; Y_B|W_1, W_2)  \geq 0,
\end{align}
which simplifies into
\begin{align} \label{eqn:XtoYcontraction}
\lambda I(X_A; X_B|W_1, W_2) \geq I(Y_A; Y_B|W_1, W_2),
\end{align}
form the fact that $\lambda > 0$. Using Equation~\eqref{eqn:mutualinfocontraction2} in~\eqref{eqn:cdef} we obtain
\begin{align} 
\left( \frac{1}{1+\lambda}-1 \right) I(X_A; X_B|W_1, W_2) + \lambda I(Y_A; Y_B|W_1, W_2) + \left( \frac{1}{1+\lambda}-1 \right) I(Y_A; Y_B|W_1, W_2)  \geq 0,
\end{align}
which simplifies into
\begin{align} \label{eqn:YtoXcontraction}
\lambda I(Y_A; Y_B|W_1, W_2) \geq I(X_A; X_B|W_1, W_2).
\end{align}
Combining (\ref{eqn:XtoYcontraction}) and (\ref{eqn:YtoXcontraction}) we obtain 
\begin{align} 
(1-\lambda^2) I(Y_A; Y_B|W_1, W_2) &\leq 0, \\
(1-\lambda^2) I(X_A; X_B|W_1, W_2) &\leq 0.
\end{align}
Knowing that $0 < \lambda < 1$, we deduce that $I(Y_A; Y_B|W_1, W_2)=I(X_A; X_B|W_1, W_2)=0$. Finally using (\ref{eqn:simplifyequalityzerothreeapproaches}) we obtain
\begin{align} \label{eqn:maindoublingtrick}
I(X_B, Y_B; X_A, Y_A|W_1, W_2) =0.
\end{align}
In addition, let us introduce the following theorem before arriving to the concluding result.
\begin{theorem}[Corollary to Theorem 1 in \cite{Ghurye--Olkin1962}] \label{thm:DT}
If $\ve{Z}_1$ and $\ve{Z}_2$ are independent multidimensional random column vectors, and if $\ve{Z}_1+\ve{Z}_2$ and $\ve{Z}_1-\ve{Z}_2$ are independent then $\ve{Z}_1$, $\ve{Z}_2$ are normally distributed with identical covariances.
\end{theorem}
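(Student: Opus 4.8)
I would prove this classical Darmois--Skitovich / Ghurye--Olkin statement by a characteristic-function argument that reduces the multivariate claim to the one-dimensional Bernstein theorem.

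\textbf{Reduction to dimension one.} Fix a direction $a$. The scalars $\langle a,\ve{Z}_1\rangle$ and $\langle a,\ve{Z}_2\rangle$ are independent, and $\langle a,\ve{Z}_1\pm\ve{Z}_2\rangle=\langle a,\ve{Z}_1\rangle\pm\langle a,\ve{Z}_2\rangle$ are independent. Hence it suffices to prove the scalar statement: if $X\perp Y$ and $X+Y\perp X-Y$, then $X$ and $Y$ are Gaussian with equal variance. Granting this for every $a$, every one-dimensional marginal of $\ve{Z}_j$ is Gaussian, so $\ve{Z}_j$ is a Gaussian vector by the Cram\'er--Wold device; and $a^{\top}\cov(\ve{Z}_1)a=a^{\top}\cov(\ve{Z}_2)a$ for all $a$ gives $\cov(\ve{Z}_1)=\cov(\ve{Z}_2)$. (Equivalently, once Gaussianity is known, the jointly Gaussian pair $(\ve{Z}_1+\ve{Z}_2,\ve{Z}_1-\ve{Z}_2)$ has cross-covariance $\cov(\ve{Z}_1)-\cov(\ve{Z}_2)$, which independence forces to vanish.)

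\textbf{The functional equation.} Let $f,g$ be the characteristic functions of $X,Y$. Computing $\mathbb{E}[e^{is(X+Y)+it(X-Y)}]$ once using $X\perp Y$ and once using $(X+Y)\perp(X-Y)$ gives the Kac--Bernstein identity
\begin{equation*}
f(s+t)\,g(s-t)=f(s)\,g(s)\,f(t)\,g(-t).
\end{equation*}
Putting $t=s$ yields $f(2s)=f(s)^2g(s)^2$ and $t=-s$ yields $g(2s)=|f(s)|^2g(s)^2$. Since $f,g$ are continuous and equal to $1$ at the origin, they are nonzero near $0$, and these doubling identities propagate nonvanishing to all of $\mathbb{R}$. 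On the (simply connected) real line $f$ and $g$ then admit continuous logarithms $\psi_f,\psi_g$ vanishing at $0$, and by continuity the identity lifts exactly (no $2\pi i$ ambiguity) to $\psi_f(s+t)+\psi_g(s-t)=\psi_f(s)+\psi_g(s)+\psi_f(t)+\psi_g(-t)$.

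\textbf{Solving it.} Decompose $\psi_f,\psi_g$ into even and odd parts; the odd parts are purely imaginary because $f(-t)=\overline{f(t)}$. A short computation shows the odd parts satisfy Cauchy's equation, hence are linear, and cancel identically from the additive identity. Evaluating what remains at $s=t$ and at $s=-t$ forces the even parts of $\psi_f$ and $\psi_g$ to be the same function $E$, which then satisfies the quadratic (parallelogram) functional equation $E(s+t)+E(s-t)=2E(s)+2E(t)$, whose continuous solutions are $E(t)=ct^2$. Thus $f(t)=e^{i\alpha t+ct^2}$ and $g(t)=e^{i\beta t+ct^2}$, and $|f|\le1$ forces $c\le0$, so $X$ and $Y$ are Gaussian with common variance $-2c$. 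This proves the scalar statement (equal variances included), and the reduction above completes the theorem.

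\textbf{Main obstacle.} The delicate points are the careful handling of the complex logarithm (choosing a branch compatible with the conjugation symmetry $f(-t)=\overline{f(t)}$ and checking the additive equation holds exactly rather than modulo $2\pi i$) and appealing to the regularity fact that \emph{continuous} solutions of the quadratic functional equation are genuine quadratic polynomials. Both are standard; the remaining steps are elementary manipulations of characteristic functions.
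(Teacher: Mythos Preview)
The paper does not prove this statement; it is quoted verbatim as a corollary of Theorem~1 in Ghurye--Olkin and used as a black box inside the proof of Theorem~\ref{Thm:Hypercontract}. So there is no ``paper's own proof'' to compare against. Your proposal supplies an actual argument, and it is the standard one: reduce to the scalar Bernstein theorem via projections and Cram\'er--Wold, then solve the Kac--Bernstein functional equation for characteristic functions. This is correct and in fact more elementary than invoking the full multivariate Darmois--Skitovich machinery of Ghurye--Olkin, because the special coefficient pattern $(\pm1,\pm1)$ lets you project onto a \emph{single} direction $a$ and still land in the scalar hypothesis; the general Ghurye--Olkin theorem handles arbitrary coefficient matrices, where such a one-line reduction is unavailable.

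One small slip: putting $t=s$ in your identity gives $f(2s)=f(s)^2\,g(s)\,g(-s)=f(s)^2\,|g(s)|^2$, not $f(s)^2g(s)^2$ as you wrote (and symmetrically for $g(2s)$). This is harmless for the nonvanishing/doubling argument, which is all you use it for. The remainder of the sketch---continuous logarithm with no $2\pi i$ ambiguity near $0$ propagated by continuity, even/odd decomposition, the parallelogram equation $E(s+t)+E(s-t)=2E(s)+2E(t)$ for the common even part, and the Cauchy equation for the odd parts---is the textbook route and goes through as you indicate.
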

Thus, the following statements are true
\begin{itemize}
\item The pair $(X_1, Y_1)$ and $(X_2, Y_2)$ are conditionally independent given $W_1 = w_1, W_2 = w_2$ from assumption.
\item The pair $(\frac{X_1+X_2}{\sqrt{2}}, \frac{Y_1+Y_2}{\sqrt{2}})$ and $(\frac{X_1-X_2}{\sqrt{2}}, \frac{Y_1-Y_2}{\sqrt{2}})$ are conditionally independent given $W_1 = w_1, W_2 = w_2$. This follows from (\ref{eqn:maindoublingtrick}).
\end{itemize}

By applying Theorem~\ref{thm:DT} to the fact regarding conditional independence established in Equation~\eqref{eqn:maindoublingtrick}, we can infer that $(X,Y)| \{ W=w\} \sim \mathcal{N}(0,K_w)$, where $K_w$ might depend on the realization of $W=w$. We will now argue that this is not the case. To make a brief summary we have shown the existence part thus, by choosing $W$ to be the trivial random variable a single Gaussian (i.e. not a Gaussian mixture) is one of the possible minimizers. Let us suppose that there are two Gaussian minimizers $\mathcal{N}(0,K_{w_1})$ and $\mathcal{N}(0,K_{w_2})$, where $K_{w_1} \neq K_{w_2}$. Consider the random variable $(W,X,Y)$ where, $(X,Y)|\{ W=w_1\} \sim \mathcal{N}(0,K_{w_1})$ and $(X,Y)|\{ W=w_2\} \sim \mathcal{N}(0,K_{w_2})$. Therefore the triple $(W,X,Y)$ also attains $V(K)$ and satisfies the covariance constraint. At the same time, we showed that the sum and the difference are also minimizers and they must be independent of each other, which happens only when $K_{w_1}= K_{w_2}$. In other words, $K_w$ does not depend on the realization $W=w$, and the $(W,X,Y)$ is a single Gaussian (i.e. not a Gaussian mixture). We established that $(W,X,Y)$ is a unique Gaussian minimizer, and thus
\begin{align}
V(K)=h(X|W)+h(Y|W)-(1+\lambda)h(X,Y|W).
\end{align}
Furthermore, there exists a decomposition $(X,Y)=W+(X^{\prime},Y^{\prime}) \sim \mathcal{N}(0,K)$, where $W$ is independent of $(X^{\prime},Y^{\prime})$ and $W \sim \mathcal{N}(0,K-K^{\prime})$ and $(X^{\prime},Y^{\prime}) \sim \mathcal{N}(0,K^{\prime})$. Then,
\begin{align}
V(K)=h(X^{\prime})+h(Y^{\prime})-(1+\lambda)h(X^{\prime},Y^{\prime}),
\end{align}
thus establishing Theorem~\ref{Thm:Hypercontract}, because
\begin{align}
V(K) \leq \inf_W h(X|W)+h(Y|W)-(1+\lambda)h(X,Y|W),
\end{align}
by the definition of $V(K)$.

\section{Proof of Lemma \ref{lem:lemmasymmery}} \label{App:lowerboundmainRWCI}

Let $K^{\prime}$ be parametrized as $K^{\prime} = \begin{pmatrix} \sigma^2_X & q\sigma_X \sigma_Y \\ q\sigma_X \sigma_Y & \sigma^2_Y  \end{pmatrix} \succeq 0$. Then, the problem is the same to the following one

\begin{align}
&\min_{K^{\prime}: 0 \preceq K^{\prime} \preceq \begin{pmatrix} 1 & \rho \\ \rho &1  \end{pmatrix}} h(X^{\prime})+h(Y^{\prime})-(1+\lambda)h(X^{\prime},Y^{\prime}) \nonumber \\
& \quad \quad \quad = \min_{(\sigma_X,\sigma_Y,q) \in \mathcal{A}_{\rho}} \frac{1}{2}\log{(2 \pi e)^2 \sigma_X^2\sigma_Y^2} -\frac{1+\lambda}{2}\log{(2 \pi e)^2 \sigma_X^2\sigma_Y^2(1-q^2)} \label{eqn:mlproof1}
\end{align}
where the set
\begin{align} 
\mathcal{A}_{\rho}=\left\{( \sigma_X,\sigma_Y,q): \begin{pmatrix} \sigma^2_X -1& q\sigma_X \sigma_Y -\rho\\ q\sigma_X \sigma_Y-\rho & \sigma^2_Y -1 \end{pmatrix} \preceq 0 \right\}.
\end{align}
Matrices of dimension $2\times2$ are negative semi-definite if and only if the trace is negative and determinant is positive. Thus, we can rewrite the set as 
\begin{align} 
\mathcal{A}_{\rho}=\left\{(\sigma_X,\sigma_Y,q): \substack{\sigma^2_X+\sigma^2_Y \leq 2, \\ \quad (1-q^2)\sigma^2_X\sigma^2_Y +2\rho q \sigma_X\sigma_Y +1-\rho^2-(\sigma^2_X+\sigma^2_Y) \geq 0} \right\}.
\end{align}
By making use of $\sigma^2_X+\sigma^2_Y \geq 2\sigma_X\sigma_Y$, we derive that $\mathcal{A}_{\rho} \subseteq \mathcal{B}_{\rho}$, where
\begin{align} 
\mathcal{B}_{\rho}=\left\{ (\sigma_X,\sigma_Y,q): \substack{\sigma_X\sigma_Y \leq 1, \\ \quad (1-q^2)\sigma^2_X\sigma^2_Y +2\rho q \sigma_X\sigma_Y +1-\rho^2-2\sigma_X\sigma_Y) \geq 0} \right\}.
\end{align} We will further reparametrize and define $\sigma^2=\sigma_X\sigma_Y$, thus 
\begin{align} 
\mathcal{D}_{\rho}=\left\{( \sigma^2,q): \substack{\sigma^2 \leq 1, \\ (\sigma^2(1-q)-1+\rho)(\sigma^2(1+q)-1-\rho) \geq 0} \right\}.
\end{align}
The second equation in the definition of the set $\mathcal{D}_{\rho}$ has roots $\sigma^2=\frac{1+\rho}{1+q}$ and $\sigma^2=\frac{1-\rho}{1-q}$, thus the inequality is true if $\sigma^2$ is not in between these two roots. Thus, we can rewrite the set $\mathcal{D}_{\rho}$ as
\begin{align} 
\mathcal{D}_{\rho}=\left\{( \sigma^2,q): \substack{\rho \geq q, \quad \sigma^2(1-q) \leq 1-\rho \\ \rho < q, \quad  \sigma^2(1+q) \leq 1+\rho } \label{eqn:D} \right\}.
\end{align}
Thus, we have
\begin{align}
\min_{(\sigma_X,\sigma_Y,q) \in \mathcal{A}_{\rho}} \frac{1}{2}\log{(2 \pi e)^2 \sigma_X^2\sigma_Y^2} -\frac{1+\lambda}{2}\log{(2 \pi e)^2 \sigma_X^2\sigma_Y^2(1-q^2)} \geq \min_{(\sigma^2,q) \in \mathcal{D}_{\rho}} f(\lambda,\sigma^2,q)
\label{eqn:mlproof2}
\end{align}
where,
\begin{align}f(\lambda,\sigma^2,q)&=\frac{1}{2}\log{(2 \pi e)^2 \sigma^4}-\frac{1+\lambda}{2}\log{(2 \pi e)^2 \sigma^4(1-q^2)} \label{eqn:f}.
\end{align}
For now let us assume $\rho$ is positive and start from the case $\rho \geq q$. Then, by weak duality we have 
\begin{align}
\min\limits_{(\sigma^2,q) \in \mathcal{D}_{\rho}} f(\lambda,\sigma^2,q) \geq \min\limits_{\sigma^2,q} f(\lambda,\sigma^2,q) + \mu(\sigma^2(1-q)-1+\rho)), \label{eqn:mlproof3}
\end{align}
for any $\mu \geq 0$.
By applying Karush-Kuhn-Tucker (KKT) conditions we get
\begin{align}
\frac{\partial }{\partial \sigma^2}=-\frac{\lambda}{\sigma^2} + \mu (1-q)&=0, \label{eqn:KKT1} \\ 
\frac{\partial }{\partial q}=\frac{(1+\lambda)q}{1-q^2} - \mu \sigma^2&=0, \label{eqn:KKT2} \\
\mu(\sigma^2(1-q)-1+\rho))&=0, \label{eqn:KKT3}
\end{align}
where (\ref{eqn:KKT1}), (\ref{eqn:KKT2}) is known as stationary condition and (\ref{eqn:KKT3}) is known as complementary slackness condition. By using (\ref{eqn:KKT1}) we get 
\begin{align}
\mu=\frac{\lambda}{\sigma^2(1-q)} \label{eqn:KKT4}.
\end{align}
By using (\ref{eqn:KKT2}) we get 
\begin{align}
\mu=\frac{(1+\lambda)q}{\sigma^2(1-q^2)} \label{eqn:KKT5}.
\end{align}
By equating (\ref{eqn:KKT4}) and (\ref{eqn:KKT5}) we deduce that $q_*=\lambda$. Since $\lambda>0$, then $\mu \neq 0$ and by using (\ref{eqn:KKT3}) we get $\sigma^2_*=\frac{1-\rho}{1-\lambda}$. In addition, $\mu_*=\frac{\lambda}{1-\rho}$. Since the KKT conditions are satisfied by $q_*,\sigma^2_*$ and $\mu_*$ then strong duality holds, thus 
\begin{align}
\min\limits_{\sigma^2,q} f(\lambda,\sigma^2,q) + \mu(\sigma^2(1-q)-1+\rho))&= f(\lambda,\frac{1-\rho}{1-\lambda},\lambda) \nonumber \\
&= \frac{1}{2} \log{\frac{1}{1-\lambda^2}}-\frac{\lambda}{2} \log{(2\pi e)^2\frac{(1-\rho)^2(1+\lambda)}{1-\lambda}}. \label{eqn:mlprooffinal}
\end{align}
By combining (\ref{eqn:mlproof1}), (\ref{eqn:mlproof2}), (\ref{eqn:mlproof3}) and (\ref{eqn:mlprooffinal}) we get the desired lower bound. 

For the case $\rho < q$, let us optimize over $\sigma^2$ for any fixed $q$. The function $f$ is decreasing in $\sigma^2$. Also, the function $f$ is convex in $\sigma^2$. Since the object is continuous in $\sigma^2$ and the constraint is linear for any fixed $q$, then the optimal choice is $\sigma^2=\frac{1+\rho}{1+q}$. Thus, 
\begin{align}
\min\limits_{(\sigma^2,q) \in \mathcal{D}_{\rho}} f(\lambda,\sigma^2,q) \geq \min\limits_{q \in [\rho, 1]} f(\lambda,\frac{1+\rho}{1+q},q). \label{eqn:mlproof4}
\end{align}
The function on the right hand side can be written as
\begin{align}
f(\lambda,\frac{1+\rho}{1+q},q)&= \frac{1}{2}\log{(2\pi e)^2\frac{(1+\rho)^2}{(1+q)^2}} -\frac{1+\lambda}{2}\log{(2\pi e)^2\frac{(1+\rho)^2(1-q)}{(1+q)}}.
\end{align}
The function is convex and increasing in $q$ for $q \in [\rho, 1]$, 
\begin{align}
\frac{\partial f}{\partial q}&=\frac{q+\lambda}{1-q^2}>0, \\
\frac{\partial^2 f}{\partial q^2}&=\frac{1+q^2+2\lambda q}{(1-q^2)^2} >0,
\end{align}
thus, the optimal value of $q=\rho$, is guaranteed to give the minimum. To conclude we show that $f(\lambda,\frac{1-\rho}{1-\lambda},\lambda) \leq f(\lambda,1,\rho)$ for $\lambda \leq \rho$. To show this we define
\begin{align}
h(\lambda)&:=f(\lambda,\frac{1-\rho}{1-\lambda},\lambda)-f(\lambda,1,\rho)\\
&=\frac{1}{2}\log{\frac{1-\rho^2}{1-\lambda^2}}- \frac{\lambda}{2}\log{\frac{(1+\lambda)(1-\rho)}{(1-\lambda)(1+\rho)}},
\end{align}
and the new defined function is increasing in $\lambda$,
\begin{align}
\frac{\partial h}{\partial \lambda}&=-\frac{1}{2} \log{\frac{(1+\lambda)(1-\rho)}{(1-\lambda)(1+\rho)}} \geq 0, \quad \text{for } \lambda\leq \rho 
\end{align}
and it is concave in $\lambda$,
\begin{align}
\frac{\partial^2 h}{\partial \lambda^2}&=-\frac{1}{1-\lambda^2}<0,
\end{align}
thus, $h(\lambda) \leq h(\rho)=0$. Then, $f(\lambda,1,\rho) \geq f(\lambda,\frac{1-\rho}{1-\lambda},\lambda)$. The argument goes through also for the case when $\rho$ is negative, which completes the proof.


\section*{Acknowledgment}
This work was supported in part by the Swiss National Science Foundation under Grant 169294, Grant P2ELP2\_165137. 

\bibliographystyle{IEEEtran}
\bibliography{nit_wyn,caching_giel}


\end{document}